\documentclass[format=acmsmall, review=false]{acmart}

\makeatletter
\def\@ACM@checkaffil{% Only warnings <<<<<<<<<<<<<<<<
    \if@ACM@instpresent\else
    \ClassWarningNoLine{\@classname}{No institution present for an affiliation}%
    \fi
    \if@ACM@citypresent\else
    \ClassWarningNoLine{\@classname}{No city present for an affiliation}%
    \fi
    \if@ACM@countrypresent\else
        \ClassWarningNoLine{\@classname}{No country present for an affiliation}%
    \fi
}
\makeatother

\usepackage{acm-ec-23}
\usepackage{booktabs} % For formal tables
\usepackage[ruled]{algorithm2e} % For algorithms

\SetAlFnt{\small}
\SetAlCapFnt{\small}
\SetAlCapNameFnt{\small}
\SetAlCapHSkip{0pt}
\IncMargin{-\parindent}

\usepackage{amsfonts,amsmath,amsthm}
\newtheorem{theorem}{Theorem}
\newtheorem{definition}{Definition}

\newtheorem{lemma}{Lemma}

\newtheorem{example}{Example}

\usepackage{adjustbox}
\usepackage{rotating}
\usepackage{threeparttable}
\usepackage{booktabs}

% space between rows

\newcommand\fnote[1]{\captionsetup{font=scriptsize, justification=justified}\caption*{#1}}
\ExplSyntaxOn
\cs_new:Npn \expandableinput #1
  { \use:c { @@input } { \file_full_name:n {#1} } }
\ExplSyntaxOff

% Choose a citation style by commenting/uncommenting the appropriate line:
%\setcitestyle{authoryear}
\setcitestyle{acmnumeric}

% Title. Note the optional short title for running heads. In the interest of anonymization, please do not include any acknowledgements.
\title[A Task-Interdependency Model of Complex Collaboration]{A Task-Interdependency Model Of Complex Collaboration\\ Towards Human-Centered Crowd Work}

% Anonymized submission.
\author{David T. Lee}
\email{dlee105@ucsc.edu}
\affiliation{%
  \institution{University of California, Santa Cruz}
  \city{}
  \country{USA}
}

\author{Christos A. Makridis}
\email{cmakridi@stanford.edu}
\affiliation{%
  \institution{National Artificial Intelligence Institute and Stanford University}
  \city{}
  \country{USA}
}

% Abstract. Note that this must come before \maketitle.
\begin{abstract}
Models of crowdsourcing and human computation often assume that individuals independently carry out small, modular tasks. However, while these models have successfully shown how crowds can accomplish significant objectives, they can inadvertently advance a less than human view of crowd workers and fail to capture the unique human capacity for complex collaborative work. We present a model centered on interdependencies—a phenomenon well understood to be at the core of collaboration—that allows one to formally reason about diverse challenges to complex collaboration. Our model represents tasks as an interdependent collection of subtasks, formalized as a task graph. We use it to explain challenges to scaling complex collaborative work, underscore the importance of expert workers, reveal critical factors for learning on the job, and explore the relationship between coordination intensity and occupational wages. Using data from O*NET and the Bureau of Labor Statistics, we introduce an index of occupational coordination intensity to validate our theoretical predictions. We present preliminary evidence that occupations with greater coordination intensity are less exposed to displacement by AI, and discuss opportunities for models that emphasize the collaborative capacities of human workers, bridge models of crowd work and traditional work, and promote AI in roles augmenting human collaboration.
\end{abstract}

\begin{document}
\begin{titlepage}
\maketitle
\end{titlepage}

% Paper body
\section{Introduction and Related Work}

\noindent Mathematical models of crowdsourcing and human computation today largely assume small modular tasks, ``computational primitives'' such as labels, comparisons, or votes requiring little coordination~\cite{Little2010-dx,Chen2016-fw}. These have been used to understand how crowds can be organized to advance scientific discoveries, mobilize collective action, and contribute to participatory governance~\cite{Simpson2014-wx,Ghonim2012-qq,Pennycook2019-li,Groh2022-yr,Okolloh2009-gb,Noveck2015-lg,Michelucci2016-wo}.

However, this narrow micro-task view of crowd work has two significant drawbacks. First, it reinforces a less than human view of workers common in crowdsourcing, where workers are treated as low skilled, replaceable, and untrustworthy, carrying out simple tasks in online labor markets for low pay under algorithmic management~\citep{Gray2019-us,Horton2010-aq}, with requesters able to refuse pay at whim, and workers lacking opportunities for education and career growth~\citep{Kittur2013-qr,Rivera2021}. Could our micro-task models be \textit{``creating contexts in which people will be seduced by the economically convenient fiction that [crowd workers] are machines and should be treated as such?''}~\citep{Silberman2010-tk}. This need for more human-centric models is particularly pressing in light of advances in AI, the increasing use of human-in-the-loop methods, and calls to design AI to complement rather than only substitute human labor~\cite{Cranshaw2017-et,Barowy2012-fg,Brynjolfsson2014-xw,Heer2019-uf,Anthes2017-gl,Kittur2019-cc}.

Second, a narrow micro-task view of crowd work also fails to capture the human capacity for complex collaborative work where the main concerns are how to effectively structure, delegate, and collaborate on work that may be large in scope, underdefined, and highly interdependent~\cite{Valentine2017-fs,Kim2014-dp,Kulkarni2012-jw}. 
Citizen science projects have noted the centrality of modularity and interdependence for understanding the role of citizen scientists, experts, and computational technologies, and raised the need for methods to systematically create successful decompositions, reason about the impact of AI on citizen scientists, and support upskilling of participants~\cite{Ponti2022-tt,Franzoni2014-fs,Sauermann2015-rz,Trouille2019-lc}. 
As stated in a challenge problem for the field, how might we define new models for crowd work that 
\textit{``solve tasks that are much more challenging and less `transactional', complex problems where there are strong constraints and interdependencies''}~\cite{Chen2016-fw}.

This paper introduces a new model of complex crowd work that captures the unique human capacity to collaborate with others and the challenges humans face when doing so. %We build on the notion of task complexity, defined at a high-level in terms of the number of components and the nature of the interdependencies between them, but which has remained challenging to formalize or operationalize. Existing models largely center on operationalizing task complexity for experimental purposes and can be domain-specific, involve dozens of contributing factors, or even differ in their emphasis on objective (task-specific) complexity versus subjective (experienced) complexity~\cite{Hackman1969-yc,Wood1986-hf,Haerem2015-ox,Liu2012-mf}. Despite this, experimental and ethnographic studies have shown the impact of interdependence and coordination costs on complex crowd work, demonstrating the concept's usefulness for reasoning about complex work~\cite{Straub2023-tf,Almaatouq2021-lo,Yang2022-bv,Hulikal_Muralidhar2022-fy,Lustig2020-zv}.
Within AI, a large literature has studied mathematical models of micro-task crowd work, e.g. for optimizing or incentivizing quality, efficiency and cost~\cite{Khetan2016-sh,Sheng2008-fa,Mao2013-fz}, or for eliciting preferences for various goals~\cite{Lee2014-re,Benade2017-il,Yi2012-hi}. There is little work on mathematical models for \textit{complex crowd work} that might help one to reason about how AI can augment or complement human labor~\cite{Brynjolfsson2014-xw}. Some work has studied models of macro-task work, such as for studying the challenge of scheduling multiple workers to carry out multiple large tasks~\cite{Schmitz2018-te,Haas2015-oo}. However, for their focus, workers are modeled as contributing additively. Our interest is in modeling the interdependencies that make collaborative work not additive, and that make it difficult to divide up and collaborate on complex work, and have been the subject of study in numerous empirical studies on task complexity~\cite{Straub2023-tf,Almaatouq2021-lo,Yang2022-bv,Hulikal_Muralidhar2022-fy,Lustig2020-zv}.

In our model, tasks are represented as collections of interdependent subtasks, formalized as a task graph. Each node is a subtask with an arbitrary size parameter. Interdependencies, represented as node and edge weights, impose costs on workers who need to spend time absorbing context of relevant work. Importantly, workers do not have to pay this context cost for work they did themselves. 

To illustrate how this simple model can be used to reason about diverse aspects of complex collaboration, we apply it across several settings. We examine the limits of scaling complex crowd work even when one has infinite workers, showing how high interdependencies and low task granularity bound work capacity to a constant factor of the contributions of top workers, which is in turn limited when workers are short-term novices. This explains the evolution of complex crowd work from parallelizable tasks to modular workflows to long-term experts~\cite{Valentine2017-fs}, and answers questions on the limits of social-media based mobilization~\cite{Cebrian2016-oz}. 

We then examine recruitment and upskilling, showing the outsized role top workers play in determining work capacity. We define a stylized model of legimitate peripheral participation (LPP)~\cite{Lave1991-ds} that allows us to formally reason about its limitations. We show that decomposing tasks and eliminating interdependencies is not enough for successful LPP. Instead, appropriate patterns of interdependence are actually critical for LPP to facilitate pathways from novice tasks to expert ones. 

Finally, we turn to the economy as a setting where complex collaborative work already exists, using our model to explore the relationship between coordination intensity and occupational wages. Using occupational data from O*NET and the Bureau of Labor Statistics, we introduce a new index of coordination intensity and validate the predicted positive correlation. \footnote{Although our results are robust to controls for demographic characteristics and skill intensities across occupations, we recognize that they are correlations, rather than causal estimates.} 
We conclude with further empirical examination of coordination intensity, finding evidence that higher coordination intensity occupations are more resistant to displacement by AI based on historical growth in automation and OpenAI data on LLM exposure.

This paper contributes a model of human collaboration that we hope will be, in the words of a past reviewer, \textit{``a very elegant and fertile way to think differently from the `tasks being cut and parallelized' approach [to human computation] that is way more connected with how humans operate''}. %\footnote{Thanks to an anonymous reviewer for this concise articulation.}. 
It also contributes to a timely and important debate about the effects of AI on the economy and how recent advances may complement, rather than displace, skilled workers. 
Although recent research by OpenAI points out that nearly 20\% of the workforce may have over half of their tasks replaced by generative AI \citep{openai:laborimpact}, there may be ways it can enhance labor productivity \citep{Brynjolfsson:generative}. 
Some have suggested that understanding the impact of AI on labor will require connections between microlevel models of work and macrolevel models of markets~\cite{Frank2019-tg}. While our paper only provides a simple illustration of connecting a microlevel model of interdependencies with macrolevel insights on coordination intensity and wages, our empirical results provide further evidence that jobs requiring greater coordination intensity may be augmented by AI, or at least not replaced by it, and further motivate the need for models that emphasize collaborative capacities of human workers, bridge models of crowd work and traditional work, and promote AI in roles augmenting human collaboration.

\section{Modeling Complex Collaborative Work}\label{sec:modeling}

\subsection{Task graphs and workers}

Our model generalizes the assumption of small, independent micro-tasks in several ways. Like existing models, we represent a task as a collection of subtasks $V$. However, unlike most models, our subtasks can be dependent on other prerequisite subtasks through edges $E$, with weights $d_{uv}$, $d_v$ codifying levels of interdependencies between and within subtasks, as will be detailed later. Additionally, subtasks are not necessarily micro-tasks that can be completed by a single worker, but have a size parameter $s_v$ that can be arbitrarily large. We use $n = \lvert V \rvert$ to denote the number of subtasks and $V(G)$, $E(G)$ to denote the set of nodes and edges of the task graph $G$, and $N(v) := \{u \in V: (u,v) \in E\}$ to denote the prerequisite subtasks that $v$ is dependent on. Formally,

\begin{definition}\label{def:task-graphs}
A task graph is a weighted directed acyclic graph $G = (V, E, s, d)$ with sizes $s_v \in \mathcal{R}_+$ and interdependencies $d_v, d_{u, v} \in [0,1]$ for $v \in V$ and $(u,v) \in E$. 
\end{definition}
Individuals are represented as a set $I$. Each individual $i \in I$ has an \textit{available time} $t_i > 0$ representing the time they have and \textit{expertise} $e_{iv} > 0$ representing how time spent on subtask $v$ translates to work accomplished as a multiplicative factor\footnote{A multiplicative factor for scaling the time allocated towards work activities that translate into output is common in labor economics; see, for example, \cite{BilsCho1994:cyclic}.}. % Since the expertise can be any positive value, it can capture any distribution of worker productivity.}. 
The baseline value of $e_{iv} = 1$ denotes that $1$ unit of time results in $1$ unit of work. If worker $i$ spends $t$ time on subtask $v$, they complete $e_{iv}t$ units of work. To complete subtask $v$ by themselves, they would need $s_v/e_{iv}$ time.

\subsection{Interdependencies and work}

The time it takes to complete a task is affected by interdependencies which capture the challenge of dividing work between many people due to lost context~\citep{Miller2016-im}. \textit{``Imagine the problem of writing code or writing an article, where we wish to enable a crowd to contribute iteratively.
A worker would have to know enough about what it is they should work on [and] how the subgoal may fit within the overall aim. But if the cost of understanding the context dominates the time that the worker is willing to contribute, [this] becomes costly, ineffective, or impossible''}~\cite{Zhang2011-cj}. 
%Recent models consider how to schedule larger macro-tasks~\citep{Haas2015-oo,Schmitz2018-te}, but there have not yet been any models that capture the challenges of complex crowd work arising from the need to assign multiple workers to the same tasks or to highly interdependent tasks. 

Interdependencies $d_{uv}, d_v \in [0, 1]$ represent the amount of time workers need to spend picking up context as a multiplicative factor on prior work\footnote{The choice of $d_{uv}$, $d_v$ as multiplicative factors is purely notational to make them lie between $0$ and $1$ regardless of task size. One can recover non-multiplicative interdependencies, e.g. with an interdependency of $d_{uv} = c/s_{u}$ to obtain a constant cost $c$.}. If $u$ is a prerequisite of $v$, a worker $i$ assigned to $v$ would need to spend $s_ud_{uv}/e_{iu}$ time on context for $u$. If prior workers have already completed $\gamma$ units of work on $v$, then a worker $i$ assigned to continue work on $v$ would need to spend $\gamma d_v/e_{iv}$ time picking up this prior context. 
Varying $d_{uv}, d_v$ from $0$ to $1$ allows one to capture the entire range of possible dynamics — from work that can be done completely independently without penalty ($d = 0$), to work that is so interdependent that it is impossible to gain any benefit from dividing it up ($d = 1$) since the context time imposed is the same as the time required to redo prior work. These extreme cases correspond to typical assumptions of crowd work.

\subsection{Worker assignments and feasibility}

Large subtasks represent work that has not been decomposed, and thus may require multiple workers to complete. When multiple workers are assigned to a single subtask, they can either decompose the subtask themselves into smaller subtasks (out of scope\footnote{One can formalize workers discovering underlying structure by modeling partial knowledge of task structure as partitions of subtasks in a ``true'' task graph that gives rise to a ``known'' task graph.}), 
or work on the subtask sequentially.% with sequential work being a simple default option that is always available to workers who do not have a better process for dividing up work. 
\footnote{This does \textit{not} mean the model only captures sequential work. A task graph can have arbitrary numbers of subtasks that are carried out in parallel or in other complex configurations. It is only when multiple workers are assigned to a single subtask that they need to either carry out the work sequentially or have some other process for further dividing up the work. This setting generalizes the common one where subtasks are only assigned to individuals.}
An assignment describes the workers assigned to each subtask, and if there are multiple workers assigned to a given subtask, the order in which they carry out work. Formally,

\begin{definition}
An assignment $\mathcal{A}$ of workers to a task graph $G$ is a set of tuples $\{(\sigma_v, r_v)\}_{v \in V(G)}$, where $\sigma_v(k)$ denotes the $k$-th worker assigned to $v$ and $r_v(k)$ denotes the time that worker spends on $v$. We use $r_{iv}$ to denote the time allocated for worker $i$ to $v$, i.e. $r_{\sigma_v(k)v} = r_v(k)$.
\end{definition}

We note that $r_v(k)$ and $r_{iv}$ are defined to only include the time workers spend on work for $v$ or on context for prior work done on $v$. For notational convenience, they \textit{do not} include the time spent picking up context for \textit{prerequisite} subtasks of $v$, i.e. $\sum_{u \in N(v)}\frac{s_ud_{uv}}{e_{\sigma_v(k)u}}$. This needs to be added to obtain the total time a worker spends on their assignment.

Consider an assignment $(\sigma_v, r_v)$ of workers to a single subtask. The first individual assigned to $v$ is $\sigma_v(1)$, who spends time picking up context for subtasks in $N(v)$ and then up to $r_v(1)$ \textit{additional} time working on $v$. If this is not sufficient to complete $v$, then $\sigma_v(2)$ picks up where $\sigma_v(1)$ left off, spending time on context for subtasks in $N(v)$ and up to $r_v(2)$ time working on $v$. Note that $r_v(2)$ now also includes the time $\sigma_v(2)$ needs to spend on context for the prior work on $v$ done by $\sigma_v(1)$, but does \textit{not} include time spent on context for prerequisite subtasks of $v$. This repeats until either $v$ is completed or $\sigma_v$ is exhausted. If $v$ is completed, then $(\sigma_v, r_v)$ is \textit{$v$-feasible}. If all subtasks are completed, and if the total time spent by each worker $i$ is at most $t_i$, then the assignment $\mathcal{A}$ is \textit{feasible}. 

\begin{example}\label{ex:single-subtask}

Consider a subtask $v$ with size $s_v = 10$ and no prerequisite subtasks. Consider an assignment $\sigma_v = (i, j, k)$, $r_v = (5, 5, 5)$ of workers with baseline expertise $e_{iv} = e_{jv} = e_{kv} = 1$. Then $d_v = 0, (3-\sqrt{5})/2, 1$ represent scenarios from zero to complete interdependence:

{\bf No interdependencies, $d_v = 0$.} Worker $i$ spends all 5 units of time to complete 5 units of work. Since $d_v = 0$, $j$ does not spend any time picking up prior context and spends 5 units of time to complete the 5 remaining units of work. In this case, $r_v = (5, 5, 0)$ would have also been feasible.

{\bf Some interdependencies, $d_v = \frac{3-\sqrt{5}}{2}$.} After $i$ completes 5 units of work, $j$ spends $5(.3819...)$ units of time to pick up prior context, and then spends the remainder of their time to complete $3.09...$ additional units of work. Finally, $k$ spends $(5+3.09...)(0.3819...)$ time to pick up prior context, leaving just enough time to finish the remaining $1.91...$ units of work.

{\bf Full interdependencies, $d_v = 1$}. After $i$ completes 5 units of work, $j$ spends $5(1)$ time to pick up prior context, which uses up all their available time so that they cannot make any new contributions. The same holds for worker $k$. In this case, even assigning an infinite number of workers with the same time and expertise would not have been feasible.
\end{example}

Lemma \ref{lem:contribution} shows how the parts of the model work together to capture the contribution that a worker makes to a subtask, found by subtracting time spent on context from allocated time, and multiplying what remains by expertise.

\begin{lemma}\label{lem:contribution}
Given an assignment $\mathcal{A}$, let $c_v(k)$ denote the amount of time the $k$-th worker assigned to subtask $v$ spends on context for prior work done on $v$ (not including context for prerequisite subtasks). Let $w_v(k)$ denote the new contributions the $k$-th worker makes towards completing subtask $v$ and let $s_v(k) = \sum_{i=1}^k w_v(i)$ denote the total work completed on subtask $v$ by the first $k$ workers. Then,
\begin{align*}
    %c_v(k) &= \left(\sum_{u \in N(v)}\frac{s_ud_{uv}}{e_{\sigma_v(k)u}}\right) + \frac{s_v(k-1)d_v}{e_{\sigma_v(k)v}}\\
    c_v(k) &= \max\left(\frac{s_v(k-1)d_v}{e_{\sigma_v(k)v}}, r_v(k)\right),\\%\label{eqn:time-on-context},\\
    w_v(k) &= \min\left(e_{\sigma_v(k)v}(r_v(k)-c_v(k)), s_v - s_v(k-1)\right).%\label{eqn:contribution}
\end{align*}
\end{lemma}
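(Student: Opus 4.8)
The plan is to prove both recurrences by induction on the worker index $k$, reading Lemma~\ref{lem:contribution} as a direct formalization of the sequential process defined in the paragraph preceding Example~\ref{ex:single-subtask}. The state to carry through the induction is the cumulative work $s_v(k-1)$ completed on $v$ before the $k$-th worker starts; the base case is $s_v(0)=0$, since no work precedes the first worker. Assuming inductively that $s_v(k-1)$ correctly records the output of the first $k-1$ workers, I would analyze how worker $\sigma_v(k)$ partitions the budget $r_v(k)$, which — by the remark following the assignment definition — accounts for time spent on context for prior work \emph{on $v$} and on new work, but excludes context for the prerequisite subtasks $N(v)$.

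First I would fix the context term $c_v(k)$. By the interdependency model, absorbing the $s_v(k-1)$ units of prior work on $v$ requires $s_v(k-1)d_v$ units of context effort, which at expertise $e_{\sigma_v(k)v}$ costs time $s_v(k-1)d_v/e_{\sigma_v(k)v}$. Comparing this required context against the available budget $r_v(k)$ yields the stated closed form for $c_v(k)$ and naturally splits the argument into two regimes: (i) the budget covers all context, leaving a positive remainder $r_v(k)-c_v(k)$ for new work; and (ii) the budget is consumed entirely by context, so no new work is possible. I would then handle the contribution term $w_v(k)$: the leftover time $r_v(k)-c_v(k)$ is converted to output at rate $e_{\sigma_v(k)v}$, giving $e_{\sigma_v(k)v}\left(r_v(k)-c_v(k)\right)$ units in regime (i) and zero in regime (ii). Since a worker cannot complete more than the $s_v-s_v(k-1)$ units that remain, taking the minimum with this residual gives exactly the claimed $w_v(k)$. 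Setting $s_v(k)=s_v(k-1)+w_v(k)$ restores the hypothesis and closes the induction.

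The step requiring the most care is the coupling of the two formulas at the boundary between the regimes. I would verify that when context exhausts the budget the expression for $w_v(k)$ collapses to zero, and conversely that once $v$ is finished ($s_v(k-1)=s_v$) the residual cap forces every later contribution to vanish, so the recurrence is consistent with ``completion'' of the subtask. The only real subtlety is translating the informal ``up to $r_v(k)$ time working'' into the two hard caps — the budget cap inside $c_v(k)$ and the remaining-work cap inside $w_v(k)$ — and checking that they do not interfere; once the time accounting is written out explicitly, each case reduces to an identity, so I expect no genuinely hard obstacle beyond this bookkeeping.
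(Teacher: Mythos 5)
Your induction on $k$ is the right — and essentially the only — way to establish this lemma, and the paper itself offers no proof at all: Lemma~\ref{lem:contribution} is presented as a direct formalization of the verbal description of sequential work on a subtask, so your argument simply makes explicit the definitional unwinding the paper leaves implicit. The bookkeeping you describe (base case $s_v(0)=0$, context cost from prior work on $v$ only, leftover budget converted to output at rate $e_{\sigma_v(k)v}$, capped by the residual $s_v - s_v(k-1)$) is correct and matches how the paper later uses these quantities.

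One point you should not have glossed over: your two-regime analysis derives $c_v(k)=\min\bigl(s_v(k-1)d_v/e_{\sigma_v(k)v},\, r_v(k)\bigr)$ — in regime (i) the worker pays exactly the required context time, in regime (ii) the budget $r_v(k)$ is exhausted by context — yet the lemma as printed has a $\max$. You assert that your derivation ``yields the stated closed form,'' which it does not literally do. The $\max$ is almost certainly a typo: with $\max$, the case $w_v(k)>0$ becomes impossible (either $c_v(k)=r_v(k)$ and the leftover time is zero, or $c_v(k)>r_v(k)$ and the formula yields a negative contribution), which would contradict the appendix proof of Lemma~\ref{lem:v-feasible}, where $w_v(k)>0$ is used to substitute $c_v(k)=d_v s_v(k-1)/e_{\sigma_v(k)v}$ with $r_v(k)-c_v(k)\ge 0$. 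So your semantics are the intended ones, but a careful write-up should flag that it proves the $\min$ version rather than silently claiming to reproduce the printed formula.
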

Suppose workers are only assigned to a single subtask. Then the assignment is $v$-feasible if and only if $s_v(\lvert \sigma_v \rvert) = s_v$ and $ t_{\sigma_v(k)} \geq r_v(k) + \sum_{u \in N(v)}\frac{s_ud_{uv}}{e_{\sigma_v(k)u}}$ for all $k$. 

\subsection{Empirical interpretation of interdependencies}\label{sec:empirical-interp}

Interdependencies can be in principle computed as the extra time required when a task is split between individuals.

\begin{example}\label{ex:empirical-interp}
Consider homogeneous workers $I$ with the same availability $t = 10$ and expertise $e$. Consider a single node task $v$ whose size $s_v$ and interdependency parameter $d_v$ are unknown. Suppose there are multiple instances of this task to complete. Now suppose that an individual finishes the first of these tasks using all $10$ units of their time, telling us that $et = s_v$. Now consider an individual who completes half of the second task in $5$ units of time before passing the remaining half to another worker who takes $8$ units of time to finish. Since workers are homogeneous, this extra time corresponds to the time the latter worker needs to pick up context that the first already has. Then $d_v$ can be calculated as $(8-5)/5 = 0.6$.
\end{example}

\section{Scalability of Complex Collaborative Work}\label{sec:scalability}

What makes collaborative work hard to scale? Why has independent, granular decomposition been critical for crowdsourcing, and how might we understand advances in moving past it? Our model provides simple expressions on the limits of feasible work that provide us with formal insights.

\subsection{Modeling work capacity}

One intuitively expects that highly modular tasks %that are easily decomposed and delegated 
would be easier to scale compared to highly interdependent ones. To reason about the maximum size task that a set of workers can complete \textit{in relation to patterns of interdependence}, we need to formalize a family of task graphs with different size parameters but the same interdependency structure.
For task graph $G$, we use $h(G)$ to denote the family of task graphs that have the same interdependency structure as $G$ (the same set of edges, interdependency parameters, and relative task sizes), but that vary in total task size. Concretely,

\begin{definition}
For a task graph $G = (V, E, s, d)$, its interdependency family $h(G)$ consists of all task graphs $G' = (V', E', s', d')$  isomorphic under a one-to-one mapping $f: V' \to V$ satisfying
\begin{enumerate}
    \item $d'_v = d_{f(v)}, d'_{uv} = d_{f(u)f(v)}$ for all $v \in V', (u,v) \in E'$
    \item $s'_{v} = c s_{f(v)}$ for all $v \in V'$ and some constant $c$,
\end{enumerate}
\end{definition}
%In other words, $h(G)$ consists of all task graphs that: 1) have the same set of edges and interdependency parameters for those edges, and 2) whose tasks have the same sizes relative to each other (just scaled by a constant).

\begin{definition}
For a family of task graphs $\mathcal{G}$ and workers $I$, the work capacity $F_{I, \mathcal{G}}$ is the maximum size of tasks in $\mathcal{G}$ that can be completed by $I$. In other words, 
% $F_{I, \mathcal{G}} = \underset{G \in \mathcal{G}}{\text{max}} \left\{\sum_{v \in V(G)} s_v : \text{ a feasible $\mathcal{A}$ exists for $G$ } \right\}$.
\begin{align*}
    %\max_{r_{iv}} s\\
F_{I, \mathcal{G}} = \underset{G \in \mathcal{G}}{\text{max}} \left\{\sum_{v \in V(G)} s_v : \text{ a feasible $\mathcal{A}$ exists for $G$ } \right\}
\end{align*}
\end{definition}
This definition allows us to use $F_{I, h(G)}$ to represent the maximum size task that workers $I$ can complete for a given interdependency structure. One can then reason about scalability of that interdependency structure by analyzing how $F_{I_n, h(G)}$ grows given a \textit{sequence} of workers $I_1, I_2, \ldots, I_n$.

\subsection{Limits of feasible work even with infinite workers}

To what extent would \textit{infinite workers} enable the completion of complex goals? Theorem \ref{thm:infinite-workers-general} shows that for \textit{any non-zero interdependency} ($d > 0$), there is a fundamental limit to feasible task size.

\begin{theorem}\label{thm:infinite-workers-general}
Consider a task graph $G = (V, E, s, d)$. Let $I_n$ denote a set of $n$ homogeneous workers, i.e. $t_i = t$ and $e_{iv} = e_v$ for all $i \in I_n$ and $v \in V$. Let $s = \sum_{v \in V}s_v$. Then,
\begin{align}
    \lim_{n \to \infty} F_{I_n, h(G)} &= \frac{ts}{\max_{v \in V}\left[\frac{s_vd_{v}}{e_{v}} + \sum_{u \in N(v)}\frac{s_ud_{uv}}{e_{u}}\right]}.\label{eqn:infinite-workers-general}
\end{align}
For $d_{uv} = d_v = d$ and $e_v = e$, this simplifies to
\begin{align}
    \lim_{n \to \infty} F_{I_n, h(G)} %&= \frac{et}{d}\min_v\frac{1}{p_v + \sum_{u \in N(v)}p_u} 
    &= \frac{et\gamma_G}{d},\label{eqn:infinite-workers-capacity}
\end{align}
where $\gamma_G = s/\max_v \left[s_v + \sum_{u \in N(v)}s_u\right]$.
\end{theorem}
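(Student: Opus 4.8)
The plan is to reduce the computation of $F_{I_n, h(G)}$ to finding the largest scaling factor $c$ for which the rescaled graph $G'$ (with $s'_v = c\,s_v$) admits a feasible assignment. Since every member of $h(G)$ has total size $\sum_v s'_v = cs$, we have $F_{I_n, h(G)} = s\cdot c_n^*$, where $c_n^*$ is the largest such factor achievable with $n$ workers, and the goal becomes computing $\lim_n c_n^*$. The first observation is that, with infinitely many homogeneous workers, the aggregate time supply $nt \to \infty$ never binds: processing the DAG in topological order, I can dedicate a fresh, disjoint pool of workers to each subtask, so that each worker is assigned to a single $v$ and spends total time $r_v(k) + \sum_{u \in N(v)} s'_u d_{uv}/e_u \le t$. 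This decouples the problem, reducing feasibility of $G'$ to the question of whether each subtask $v$ can independently be completed, so that $c_n^*$ is governed by a per-subtask constraint.

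Next I would analyze a single subtask $v$ in isolation. Writing $P_v = c\sum_{u \in N(v)} s_u d_{uv}/e_u$ for the prerequisite-context cost that every worker on $v$ must pay, each worker has at most $\tau_v := t - P_v$ time left for $v$ itself. Applying Lemma~\ref{lem:contribution} with $r_v(k) = \tau_v$ and within-subtask context $c_v(k) = s_v(k-1)\,d_v/e_v$, the completed work $W_k := s_v(k)$ satisfies the linear recursion $W_k = e_v\tau_v + (1-d_v)W_{k-1}$, $W_0 = 0$. I would argue this uniform full-budget schedule is optimal: $w_v(k) = e_v(r_v(k) - c_v(k))$ is increasing in $r_v(k)$, and the map $W_{k-1} \mapsto W_k$ is monotone since $1-d_v \ge 0$, so maximizing each allocation maximizes the running total; moreover any shared worker has $r_v(k) \le \tau_v$, so the same recursion bounds every strategy from above. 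Solving gives $W_m = (e_v\tau_v/d_v)\bigl(1-(1-d_v)^m\bigr)$, which for $0 < d_v \le 1$ increases strictly to the supremum $e_v\tau_v/d_v$; this is a genuine ceiling because $e_v\tau_v/d_v$ is the unique fixed point and $W_{k-1} < e_v\tau_v/d_v$ forces $W_k < e_v\tau_v/d_v$.

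I would then assemble the pieces. Subtask $v$ of size $c\,s_v$ is completable in the $n\to\infty$ limit precisely when $c\,s_v \le e_v\tau_v/d_v = e_v(t-P_v)/d_v$ (and, in the degenerate case $d_v = 0$, when merely $P_v \le t$, since the recursion then grows without bound); substituting $P_v$ and solving the resulting linear inequality for $c$ gives, uniformly, $c \le t/\bigl(s_v d_v/e_v + \sum_{u\in N(v)} s_u d_{uv}/e_u\bigr)$. Imposing this for every $v$ yields $\lim_n c_n^* = c^* = t/\max_v\bigl[\,s_v d_v/e_v + \sum_{u\in N(v)} s_u d_{uv}/e_u\,\bigr]$, whence $\lim_n F_{I_n, h(G)} = s\,c^*$, which is exactly \eqref{eqn:infinite-workers-general}. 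The special case follows by factoring $d/e$ from the denominator: with $d_{uv} = d_v = d$ and $e_v = e$, the bracket becomes $(d/e)\bigl(s_v + \sum_{u\in N(v)} s_u\bigr)$, so $s\,c^* = (et/d)\cdot s/\max_v[s_v + \sum_{u\in N(v)} s_u] = et\gamma_G/d$, matching \eqref{eqn:infinite-workers-capacity}.

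The main obstacle I anticipate is the optimality and decoupling argument rather than the algebra: I must rule out that some cleverer scheme — sharing workers across subtasks, using non-uniform time allocations, or a different completion order — beats the per-subtask ceiling $e_v\tau_v/d_v$. Monotonicity of the recursion handles non-uniform allocation within a subtask, and the topological ordering together with an unbounded worker supply handles ordering and sharing (a shared worker only has less time for $v$, so $r_v(k) \le \tau_v$ still holds). The remaining care is at the boundary: for every finite $n$ the supremum is approached but never attained, so $c_n^* < c^*$ strictly, and the claimed equality is genuinely a statement about the limit $\lim_n c_n^* = c^*$, precisely as the theorem asserts.
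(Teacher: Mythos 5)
Your proposal is correct and follows essentially the same route as the paper: you decouple the problem by dedicating disjoint worker pools to each subtask, derive the per-subtask geometric-sum capacity $e_v\tau_v\bigl(1-(1-d_v)^m\bigr)/d_v$ via the recursion $W_k = e_v\tau_v + (1-d_v)W_{k-1}$ (which is exactly the content and inductive proof of the paper's Lemma~\ref{lem:v-feasible}), and pass to the limit with the same upper/lower sandwich. The only cosmetic difference is that you phrase the optimization as a scaling factor $c$ on $h(G)$ rather than directly on $s_v$, and you make the $d_v=0$ degenerate case and the strictness of $c_n^* < c^*$ explicit, which the paper leaves implicit.
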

\begin{proof}
    We sketch the intuition here, leaving the formal $\epsilon-N$ proof for the Appendix. Since we have infinite homogeneous workers, the work capacity of the overall task can be found by considering infinite workers for \textit{each} subtask. %(We don't need to worry about assignments since workers are homogeneous are there are an infinite number of them).
    Lemma \ref{lem:v-feasible} gives us the maximum size for feasibility of each subtask, from which we can derive bounds for the overall task. %since each node is a fixed fraction of the overall task.
\end{proof}

\begin{lemma}\label{lem:v-feasible}
Consider a task graph $G = (V, E, s, d)$ and an assignment $\mathcal{A} = \{(\sigma_v, r_v)\}_{v \in V}$. Suppose that the individuals assigned to $v$ each make some positive contribution, i.e. $w_v(k) > 0$.
%Let $S_v$ denote the maximum size subtask that can be completed for some assignment ordering of individuals $I_v$ to the subtask $v$. Let $\sigma_v, r_v$ denote the maximal ascending assignment. 
Then $\mathcal{A}$ is $v$-feasible if and only if,
\begin{align}
\sum_{k=1}^{\lvert \sigma_v\rvert} e_{\sigma_v(k)v}r_v(k)\left(1-d_v\right)^{\lvert\sigma_v\rvert-k} \geq s_v.\label{eqn:subtask-feasibility}
\end{align}
\end{lemma}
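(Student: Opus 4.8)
The plan is to track the cumulative completed work $s_v(k)$ across the ordered workers $\sigma_v(1), \ldots, \sigma_v(m)$ (writing $m = \lvert \sigma_v \rvert$, $e_k = e_{\sigma_v(k)v}$, $r_k = r_v(k)$, and $d = d_v$) and to show that the positivity hypothesis forces this sequence to obey a clean linear recursion until the final worker, at which point feasibility collapses to the stated inequality. First I would extract the two consequences of the assumption $w_v(k) > 0$. By Lemma \ref{lem:contribution}, a strictly positive contribution means the worker is neither stalled by context nor acting on an already-finished subtask: the context term is absorbed in full rather than truncated by the allocated time (so $c_v(k) = s_v(k-1)d/e_k$), and the residual $s_v - s_v(k-1)$ is strictly positive for every $k \le m$. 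In particular $s_v(k-1) < s_v$ throughout.

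Next I would establish the recursion. For each $k \le m-1$ the subtask is still incomplete after the worker acts—otherwise worker $k+1$ would face zero residual and contribute nothing, contradicting positivity—so the minimum defining $w_v(k)$ resolves to its work-limited branch $e_k(r_k - c_v(k)) = e_k r_k - s_v(k-1)d$. Substituting into $s_v(k) = s_v(k-1) + w_v(k)$ gives
\begin{align*}
s_v(k) = e_k r_k + (1-d)\,s_v(k-1), \qquad k \le m-1.
\end{align*}
Unrolling from $s_v(0) = 0$ yields the geometric expression $s_v(m-1) = \sum_{k=1}^{m-1} e_k r_k (1-d)^{m-1-k}$.

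Finally I would analyze the last worker to obtain the biconditional in a single stroke. By definition $\mathcal{A}$ is $v$-feasible exactly when $s_v(m) = s_v$, and since $s_v(m-1) < s_v$ this holds iff worker $m$'s work-limited contribution covers the residual, i.e. $e_m r_m - s_v(m-1)d \ge s_v - s_v(m-1)$, equivalently $e_m r_m + (1-d)\,s_v(m-1) \ge s_v$. Replacing $s_v(m-1)$ by its unrolled value and collecting the geometric factors turns the left-hand side into $\sum_{k=1}^{m} e_k r_k (1-d)^{m-k}$, which is precisely \eqref{eqn:subtask-feasibility}; both directions of the iff fall out of this one equivalence.

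I expect the main obstacle to be the careful bookkeeping around the two minimum operations in Lemma \ref{lem:contribution}. The positivity hypothesis is doing real work here: without it an intermediate worker could exhaust all allocated time paying context (contributing nothing and breaking the linear recursion), or the subtask could be completed early (so the completion cap binds before the last worker and the accumulated work plateaus at $s_v$ rather than following the pure geometric law). Verifying that positivity rules out both pathologies—so that the cap can bind only, if at all, at the terminal worker—is the crux; once that regime is pinned down, the remaining algebra is a routine geometric sum.
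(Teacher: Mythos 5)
Your proposal is correct and follows essentially the same route as the paper's proof: an induction (your unrolled recursion $s_v(k) = e_k r_k + (1-d)s_v(k-1)$ is exactly the paper's inductive step), with the final worker handled separately via the $\min$ with the residual $s_v - s_v(k-1)$, and the positivity hypothesis invoked in the same two places to keep the recursion linear until the end. No gaps.
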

\begin{proof}
This can be proved by induction, see Appendix.
\end{proof}

The parameter $\gamma_{G}$ is a more nuanced measure of task granularity. In the traditional view of task granularity, one wants to decompose tasks into subtasks that are as small as possible. A naive definition based on this view might represent task granularity as $\gamma_{G} = \frac{s}{\max_v s_v}$, with a larger $\gamma_{G}$ representing tasks that have been successfully decomposed into subtasks that are small relative to the overall task. 
Theorem \ref{thm:infinite-workers-general} reveals that it is $\gamma_G = s / \max_v \left[s_v + \sum_{u \in N(v)}s_u\right]$ that matters. In other words, definitions of task granularity need to consider prerequisite context costs too. 

\subsection{Why complex collaborative work is hard to scale} 
Theorem \ref{thm:infinite-workers-general} helps to reveal key factors limiting scalability of crowd work. High interdependencies (large $d$) and low task granularity (small $\gamma_G$) bound the work capacity to a constant factor of the contributions of top workers, which is in turn limited when workers are short-term novices (low $e$ and $t$). In \citep{Cebrian2016-oz}, the authors ask, \textit{``Given all we have learned about social mobilization, why isn’t social media a more reliable channel for constructive social change? ... [it] has been much better at providing the fuel for unpredictable, bursty mobilization than at steady, thoughtful construction of sustainable social change.''}
Equation \ref{eqn:infinite-workers-capacity} would note that constructive change requires highly interdependent work, but that social-media based mobilization typically results in large numbers of mostly short-term novices, fundamentally limiting the scope of feasible work.

\subsubsection{Trends in complex crowd work}

This also provides a simple explanation of trends in complex crowd work. Efforts in augmenting complex crowd work initially focused on designing micro-task workflows consisting of small subtasks with negligible interdependencies~\citep{Little2010-dx} so that any person could easily contribute. When these efforts reached a limit, it was followed by a shift from micro-tasks to larger macro-tasks and longer-term experts~\citep{Valentine2017-fs}. Each of these are efforts to nudge the key factors in Equation \ref{eqn:infinite-workers-capacity} towards a higher work capacity.

% %see "employment growth and teamwork" in do file
% What does the literature say about optimal team size and how it has changed over the emergence of the digital economy? We find that occupations that grew in employment between 2006 and 2018 also have a greater task intensity of teamwork ($\rho$ = 0.21) and a greater growth in the intensity of teamwork ($\rho$ = 0.17). In this sense, while technology has enabled people to do more independently, we also observe that the occupations that are growing in the importance of teamwork and hold higher team work requirements also exhibited a substantial increase in employment.

% While individual effort can decrease as the team size grows \citep{Latane1979,Karau1993}, there are also benefits of collaboration \citep{Bernstein2015:network,MasonWatts2012}: even if each team member contributes less individually, the collective value can be larger together \citep{Mao2016:Experiment}. For example, in a seminal undertaking called Project Aristotle, named after his quote ``"the whole is greater than the sum of its parts,'' Google studied teams with 3 to 50 members across the organization, throughout the world, and with varying levels of seniority. Importantly, team performance  was not explained by the intellect of members, team size, or a wide range of other factors, but rather how the team worked together.

% %https://rework.withgoogle.com/print/guides/5721312655835136/

\subsubsection{Maximum context cost as the key bottleneck}\label{sec:key-bottleneck}

Equation \ref{eqn:infinite-workers-general} suggests that these factors (interdependencies, task granularity, time, expertise) boil down to a single quantity: the \textit{maximum time a worker may have to spend on context when assigned to a single subtask}, i.e. $c_v = \frac{s_vd_v}{e_v} + \sum_{u \in N(v)}\frac{s_vd_{uv}}{e_u}$. This is the true bottleneck limiting scalability, % and generalizations of it will continue to appear in later sections, 
suggesting that a worthwhile direction is to find ways to augment people's ability to pass and receive context.
%Indeed, a recent report by Microsoft highlighted the significant coordination costs that exist in work today and its impact on productivity. They found that people spend 57\% of their time in Microsoft 365 products on communication, that 68\% of people express not having enough uninterrupted focus time, and that 62\% express spending too much time searching for information. %They found that the top 25\% of email users spend 8.8 hours a week on email and the top 25\% of meeting users spend 7.5 hours a week in meetings~\citep{microsoft:aifix}.
%Recent advances in generative AI may be an interesting opportunity to reduce context costs and augment collaborative capacity.

\subsection*{A sharp threshold around maximum context cost}

%We saw that the maximum time a worker may have to spend on context for $v$, i.e. $c_v = \frac{s_vd_v}{e_v} + \sum_{u \in N(v)}\frac{s_vd_{uv}}{e_u}$, was a key determiner of feasibility. Specifically, in the case of an infinite number of homogeneous workers, we saw that an assignment is $v$-feasible if and only if $t \geq c_v$ (and thus, the entire task is feasible if and only if $t \geq \max_v c_v$). 
It turns out that $c_v$ also determines feasibility for a \textit{finite} number of homogeneous workers. Once $t$ grows just a little beyond $c_v$ to roughly $c_v(1 + \epsilon$), the number of workers required for $v$ quickly drops from infinity to less than $\frac{1}{d_v}\ln\frac{1}{\epsilon}$. 
%This is a total of $\frac{\lvert V \rvert}{d_H}\ln\frac{1}{\epsilon}$ workers across all the subtasks, where $d_{H}$ is the harmonic mean of interdependencies $d_v$.
\begin{theorem}\label{thm:sharp-threshold}
Consider a task graph $G = (V, E, s, d)$ and a set of homogeneous workers, each assigned to a single subtask. Let $n_v$ denote the number of workers assigned to $v$.
Then if $t \geq c_{v} + \frac{s_vd_v}{e_{v}}\cdot \frac{\epsilon}{1-\epsilon}$, for any $\epsilon > 0$, the assignment is feasible for $n_v \geq \frac{1}{d_v}\ln\frac{1}{\epsilon}$.
\end{theorem}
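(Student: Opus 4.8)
The plan is to reduce the claim to a single subtask, invoke Lemma~\ref{lem:v-feasible}, and then massage the resulting geometric condition using the hypothesis on $t$. Fix a subtask $v$ and write $q_v := \sum_{u \in N(v)} \frac{s_u d_{uv}}{e_u}$ for the per-worker cost of absorbing context on the prerequisites of $v$, so that $c_v = \frac{s_v d_v}{e_v} + q_v$. Since every worker assigned to $v$ must pay $q_v$ for prerequisite context regardless of assignment order, the most time any worker can devote to $v$ itself is $r := t - q_v$. First I would assign $n_v$ homogeneous workers to $v$ and give each the full allocation $r_v(k) = r$; the total time each spends is then $r + q_v = t$, so the per-worker budget constraint is met, and $r > 0$ because the hypothesis forces $t > q_v$.

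Next I would verify the geometric feasibility condition of Lemma~\ref{lem:v-feasible}. With $e_{\sigma_v(k)v} = e_v$ and $r_v(k) = r$ for every $k$, the left-hand side of \eqref{eqn:subtask-feasibility} becomes the geometric sum $e_v r \sum_{k=1}^{n_v}(1-d_v)^{n_v-k} = \frac{e_v r \left(1 - (1-d_v)^{n_v}\right)}{d_v}$, and one checks that each worker's marginal contribution $w_v(k) = e_v r (1-d_v)^{k-1}$ is strictly positive, so the lemma's hypothesis is satisfied (equivalently, one can verify completion directly from Lemma~\ref{lem:contribution}). Subtask $v$ is therefore completed precisely when $\frac{e_v r \left(1-(1-d_v)^{n_v}\right)}{d_v} \geq s_v$.

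The heart of the argument is feeding the bound on $t$ into this inequality. Rearranging the hypothesis $t \geq c_v + \frac{s_v d_v}{e_v}\cdot\frac{\epsilon}{1-\epsilon}$ yields $r = t - q_v \geq \frac{s_v d_v}{e_v}\left(1 + \frac{\epsilon}{1-\epsilon}\right) = \frac{s_v d_v}{e_v(1-\epsilon)}$. Substituting this lower bound for $r$ collapses the feasibility condition to the clean requirement $(1-d_v)^{n_v} \leq \epsilon$, i.e. the geometric context-discount must have decayed below $\epsilon$ across the $n_v$ workers; taking logarithms, this holds exactly when $n_v \geq \frac{\ln(1/\epsilon)}{-\ln(1-d_v)}$.

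Finally I would replace this exact threshold by the stated, cleaner bound using the elementary inequality $-\ln(1-d_v) \geq d_v$ for $d_v \in [0,1)$ (e.g. from $-\ln(1-x) = x + x^2/2 + \cdots$). This gives $\frac{\ln(1/\epsilon)}{-\ln(1-d_v)} \leq \frac{1}{d_v}\ln\frac{1}{\epsilon}$, so $n_v \geq \frac{1}{d_v}\ln\frac{1}{\epsilon}$ suffices to force $(1-d_v)^{n_v} \leq \epsilon$ and hence $v$-feasibility. Since the same construction applies simultaneously to every subtask and each worker is assigned to a single subtask, the overall assignment is feasible. I do not expect a deep obstacle here; the real care points are bookkeeping ones, namely cleanly separating the fixed prerequisite-context cost $q_v$ from the within-subtask allocation, ensuring each assigned worker makes a positive contribution so that Lemma~\ref{lem:v-feasible} applies verbatim, and justifying the passage from the sharp logarithmic threshold to the advertised $\frac{1}{d_v}\ln\frac{1}{\epsilon}$ bound.
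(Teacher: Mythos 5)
Your proposal is correct and follows essentially the same route as the paper: apply Lemma~\ref{lem:v-feasible} to reduce $v$-feasibility to the geometric condition $\frac{e_v r\left(1-(1-d_v)^{n_v}\right)}{d_v} \geq s_v$, feed in the hypothesis on $t$ to reduce this to $(1-d_v)^{n_v} \leq \epsilon$, and pass to the stated bound via $-\ln(1-d_v) \geq d_v$. If anything you are slightly more careful than the paper's appendix, which writes the geometric sum with $e_v t$ rather than $e_v(t - q_v)$; your explicit separation of the prerequisite-context cost $q_v$ from the within-subtask allocation $r = t - q_v$ is the cleaner bookkeeping, and the final inequality chain goes through identically.
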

\begin{proof}
Using Lemma \ref{lem:v-feasible}, we get the maximum feasible size for $v$ to be $e_vt\sum_{k=1}^{n_v}(1-d_v)^{n_v - k} = e_vt\frac{1 - (1-d_v)^{n_v +1}}{d_v}$. Our result follows from noting that $(1-d_v)^{n_v} \leq \epsilon$ when $n_v \geq \ln\frac{1}{\epsilon}/\ln\frac{1}{1-d_v}$ and from noting that $\ln\frac{1}{\epsilon}/\ln\frac{1}{1-d_v} \leq \frac{1}{d_v}\ln\frac{1}{\epsilon}$ since $\ln(1+x) \geq x/(1+x)$ for $x > -1$.
\end{proof}
% \begin{direction}
% How might one structure collaboration in ways that augment people's ability to pass and receive context?
% \end{direction}
%Indeed, many recent innovations can be viewed from the perspective of reducing time required for context~\cite{Lee2019-ek}.

\section{Recruiting and Situated Learning}\label{sec:top-workers-learning}

\subsection{Top workers approximate work capacity}\label{sec:top-workers}

%We start by first revisiting Theorem \ref{thm:sharp-threshold}, which described a sharp threshold around the maximum context cost $c_v = \frac{s_vd_v}{e_v} + \sum_{u \in N(v)}\frac{s_vd_{uv}}{e_u}$. For homogeneous workers with available time $t \geq c_v + \frac{s_vd_v}{e_v}\cdot \frac{\epsilon}{1-\epsilon}$, we showed that a feasible assignment exists if $n_v \geq \frac{1}{d_v}\ln \frac{1}{\epsilon}$. 
The finite worker case reveals another insight: when interdependencies exist, a small set of top workers maximizing \textit{expertise-weighted time} $e_{iv}r_{iv} = e_{iv}\left(t_i - \sum_{u \in N(v)}\frac{s_ud_{uv}}{e_{iu}}\right)$ essentially determine the work possible. All others can be dropped with only $\epsilon$ loss.

\begin{theorem}\label{thm:simple-top-workers}
For a task graph $G = (V, E, s, d)$, let $d_H$ denote the harmonic mean of $d_v$. Then for workers $I$ and any $\epsilon > 0$, there exists workers $I(\epsilon) \subseteq I$ such that, 
\begin{align}
    \lvert I(\epsilon) \rvert \leq \frac{\lvert V\rvert}{d_H}\ln\frac{1}{\epsilon},\text{  and  }
    F_{I(\epsilon),h(G)} \geq (1-\epsilon)F_{I, h(G)}.%\label{eqn:top-workers}
\end{align}
\end{theorem}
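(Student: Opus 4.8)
The plan is to reduce the global statement to a per-subtask pruning argument, using Lemma~\ref{lem:v-feasible} to convert the feasibility of each subtask into a single geometric-weighted inequality. Fix the task $G^*\in h(G)$ attaining $F_{I,h(G)}$ together with a feasible assignment for it, and let $S_v$ be the workers it assigns to subtask $v$, with allocated times giving expertise-weighted contributions $b_k=e_{\sigma_v(k)v}r_v(k)$. By Lemma~\ref{lem:v-feasible}, $v$-feasibility is exactly $\sum_k b_k(1-d_v)^{|S_v|-k}\ge s^*_v$. The point is that, because the weights $(1-d_v)^{|S_v|-k}$ decay geometrically, almost all of this sum is carried by the handful of workers placed last, so I can keep only the top $m_v$ workers per subtask and lose at most an $\epsilon$-fraction of the feasible size.

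First I would establish the per-subtask bound. Since the weights are increasing in the position $k$, the rearrangement inequality shows that the feasibility sum is maximized by ordering workers so that the largest $b_k$ is assigned last; this optimally ordered sum is at least the original, so relabeling the sorted values $b_{[1]}\ge\cdots\ge b_{[n]}$ gives $\Sigma_v=\sum_{j=1}^{n}b_{[j]}(1-d_v)^{j-1}\ge s^*_v$. Keeping only the top $m_v$ workers yields $\Sigma^{\mathrm{keep}}_v=\sum_{j=1}^{m_v}b_{[j]}(1-d_v)^{j-1}$, and the dropped tail satisfies $\Sigma_v-\Sigma^{\mathrm{keep}}_v=\sum_{j>m_v}b_{[j]}(1-d_v)^{j-1}\le(1-d_v)^{m_v}\Sigma_v$, since factoring out $(1-d_v)^{m_v}$, reindexing, and using $b_{[j]}\le b_{[j-m_v]}$ bounds the tail by $(1-d_v)^{m_v}$ times the whole sum. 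Hence $\Sigma^{\mathrm{keep}}_v\ge(1-(1-d_v)^{m_v})\Sigma_v$. Choosing the smallest integer $m_v$ with $(1-d_v)^{m_v}\le\epsilon$ gives $\Sigma^{\mathrm{keep}}_v\ge(1-\epsilon)s^*_v$, and the inequality $\ln\frac{1}{1-d_v}\ge d_v$ underlying Theorem~\ref{thm:sharp-threshold} yields $m_v\le\frac{1}{d_v}\ln\frac{1}{\epsilon}$.

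Next I would stitch the subtasks together. Set $I(\epsilon)=\bigcup_v T_v$, where $T_v\subseteq S_v$ are the top $m_v$ workers for $v$, and consider the uniformly rescaled task $G'\in h(G)$ with sizes $s'_v=(1-\epsilon)s^*_v$. Because every prerequisite context term $\sum_{u\in N(v)}s_ud_{uv}/e_{iu}$ shrinks under this rescaling, each kept worker's available work-time only grows, so the expertise-weighted times $e_{iv}r_{iv}$ available in $G'$ dominate the original $b_k$ term-by-term; reassigning $T_v$ to $v$ in best-last order with maximal times therefore produces a feasibility sum at least $\Sigma^{\mathrm{keep}}_v\ge(1-\epsilon)s^*_v=s'_v$. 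Thus $G'$ is feasible for $I(\epsilon)$, giving $F_{I(\epsilon),h(G)}\ge\sum_v s'_v=(1-\epsilon)F_{I,h(G)}$, while $|I(\epsilon)|\le\sum_v m_v\le\ln\frac{1}{\epsilon}\sum_v\frac{1}{d_v}=\frac{|V|}{d_H}\ln\frac{1}{\epsilon}$, the harmonic mean appearing precisely because $\sum_v 1/d_v=|V|/d_H$.

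I expect the main obstacle to be the bookkeeping that couples subtasks through prerequisite context: I must verify that the monotonicity claim (available work-times weakly increase under downscaling) is uniform enough that pruning to $T_v$ never drives a kept worker below a positive contribution, so that the clean formula of Lemma~\ref{lem:v-feasible}, which assumes each $w_v(k)>0$, still applies after pruning and reordering. A secondary nuisance is the integer rounding of $m_v$, which I would absorb either by allowing the stated bound to hide an additive $O(|V|)$ term or by checking that the smallest valid integer still satisfies $m_v\le\frac{1}{d_v}\ln\frac{1}{\epsilon}$.
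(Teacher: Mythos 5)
Your proposal is correct and follows essentially the same route as the paper's proof: prune each subtask to its top $\tfrac{1}{d_v}\ln\tfrac{1}{\epsilon}$ workers by expertise-weighted time, and bound the dropped tail of the geometric sum from Lemma~\ref{lem:v-feasible} by factoring out $(1-d_v)^{m_v}$, reindexing, and comparing term-by-term via the sorted order. Your explicit handling of the rescaled task $G'\in h(G)$ and the shrinking prerequisite context costs is a point the paper leaves implicit, but it does not change the argument.
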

\begin{proof}
% Let $I_v$ denote the workers assigned to $v$ in an optimal assignment. Let $I_v(\epsilon)$ denote the top $\frac{1}{d_v}\ln\frac{1}{\epsilon}$ workers in $I_v$ (for which $e_{iv}r_{iv}$ is maximized), and let $I(\epsilon) = \cup_{v \in V}I_v(\epsilon)$.
The proof is by direct construction of $I(\epsilon)$. Let $I_v$ denote the workers assigned to $v$ in an optimal assignment and let $I_v(\epsilon)$ denote the "top" $\frac{1}{d_v}\ln\frac{1}{\epsilon}$ workers in $I_v$, selected to maximize expertise-weighted time $e_{iv}r_{iv}$. This totals $\frac{\lvert V\rvert}{d_H}\ln\frac{1}{\epsilon}$ workers across all $v$. We then use Lemma \ref{lem:v-feasible} to get the largest feasible $s_v$, as we detail in the Appendix.
\end{proof}

% NOTE: this is found by
% \begin{theorem}\label{thm:simple-top-workers}
% For task graph $G = (V, E, s, d)$, workers $I$, and $\epsilon > 0$, there exists $I(\epsilon) \subseteq I$ such that, 
% \begin{align}
%     \lvert I(\epsilon) \rvert \leq \frac{1}{d_G}\ln\frac{1}{\epsilon},\text{  and  }
%     F_{I(\epsilon),h(G)} \geq (1-\epsilon)F_{I, h(G)},\label{eqn:top-workers}
% \end{align}
% where 
% \begin{align}
%     d_G = \displaystyle\max_{p_v : \sum_v p_v = 1}\min_{v \in V} \left(d_vp_v + \sum_{w : v \in N(w)}d_{vw}p_w\right)\label{eqn:graph-interdependence}
% \end{align}
% \end{theorem}
% Theorem \ref{thm:simple-top-workers} is tight when there are no edge interdependencies, but can be tightened significantly in other cases. For example, a clique with $n$ nodes but identical interdependencies $d_{uv} = d_v = d$ will only require $\frac{1}{d}\ln\frac{1}{\epsilon}$ workers and not $\frac{n}{d}\ln\frac{1}{\epsilon}$. For reasons we will not detail here, we conjecture that only $\frac{1}{d_G}\ln\frac{1}{\epsilon}$ top workers are needed, where $d_G = \max_{p_v : \sum_v p_v = 1}\min_{v \in V} (d_vp_v + \sum_{w : v \in N(w)}d_{vw}p_w)$, which would also be tight for the second example we gave. In either case, the main point holds: top workers are essentially all that matters for increasing work capacity when there are interdependencies.

\subsubsection{Implications for upskilling}

This has important implications for learning in complex collaborative work. Theorem \ref{thm:simple-top-workers} says that for complex work where interdependencies are high (large $d_H$), there are few reasons for people to involve more than a small number of the top workers (small $\lvert I(\epsilon) \rvert$). %, providing one explanation for the polarization of online labor markets. 
This may be why platforms either center on independent micro-tasks that don't require any expertise or complex interdependent tasks for expert freelancers. There is no middle ground and no career development pathways for crowd workers to get upskilled from micro-tasks to freelance work~\cite{Rivera2021}. 
But how can people build skills in complex work if there are limited opportunities to engage while still a novice? Building expertise in these domains typically requires absorbing tacit knowledge that can only be learned experientially in the context of real-world work.

\subsection{Modeling legitimate peripheral participation}

One starting point is to consider existing methods for situated learning such as legitimate peripheral participation (LPP)~\citep{Lave1991-ds}, where novices move from peripheral to core tasks by picking up context while making simple contributions. While LPP has been shown to support novice learning, especially in the context of open-source, its limitations are also well documented. Novices face numerous participation barriers~\citep{Steinmacher2014-vq} and most core work is still concentrated to a small number of long-term experts~\citep{Maass2004-lo}. Is it possible to formally reason about the challenges of legitimate peripheral participation? 

\subsubsection{Task ecosystems and expertise} To do this, we overload our use of task graphs to represent not just a single task instance, but an entire \textit{uniform task ecosystem} with many task instances that all share a common set of subtasks and interdependencies represented by a shared task graph $(V, E, s, d)$. 
For example, in an ecosystem centered on ``developing front-end web applications using Angular'', subtask types might include ``defining Angular routes'' or ``implementing the view of a component''. 

Worker expertise depends on the subtask type. In other words, if a worker is proficient at ``implementing the view of a component'', that proficiency applies to all such tasks within the ecosystem. For our purposes, we let each node $v$ in the shared task graph represent a different subtask type with $e_{iv}$ denoting the expertise of worker $i$ for task type $v$.

\subsubsection{Learning and LPP} 

In our model of LPP, a novice $i$ begins with expertise $e_{\text{novice}} = e_{\text{normal}}/M$ for all task types, where $M$ is the multiplicative gap between a typical worker and a complete novice. If a worker completes subtask type $v$ as the sole assigned worker, then their expertise for that task type increases to $e_{\text{normal}}$ (they have upskilled in that task type). Importantly, however, a worker is only assigned to a subtask if they can complete it within some required time $\tau$.\footnote{Even in volunteer contexts, there are many reasons why tasks not completed within some time frame might never be completed, e.g. loss of motivation, increased likelihood of errors, etc.}

\begin{definition}\label{def:lpp-accessible}
A subtask type $v$ is \textit{trivially accessible} if any novice can directly complete it, i.e. $\frac{s_v}{e_{\text{novice}}} + \sum_{u \in N(v)} \frac{s_ud_{uv}}{e_{\text{novice}}} \leq \tau$.
A subtask type $v$ is \textit{LPP-accessible} if it is trivially accessible or if a novice can eventually be assigned to complete it after completing other trivially accessible or LPP-accessible subtask types. Formally, $v$ is LPP-accessible if a set of LPP-accessible subtask types $P = \{v_1, v_2, \ldots, v_k\}$ exists such that,
\begin{align*}
    \frac{s_v}{e_{\text{novice}}} + \sum_{u \in N(v) \setminus P} \frac{s_ud_{uv}}{e_{\text{novice}}} + \sum_{u \in N(v)\cap P} \frac{s_ud_{uv}}{e_{\text{normal}}}\leq \tau.
\end{align*}
\end{definition}
% \begin{definition}
% For a task graph $G = (V, E, s, d)$, we say that $G$ supports legitimate peripheral participation if a series of assignments exist for which a worker $i$ can start as a novice and end with expertise $1$ in all tasks.
% \end{definition}
% Such a model captures the essence of LPP and allows one to start identifying obstacles to LPP. 

\subsection*{LPP requires appropriate interdependencies}

With these definitions, we can now formally discuss what is required for a task ecosystem to successfully support novice upskilling through legitimate peripheral participation.
\begin{theorem}\label{thm:lpp-accessible}
Consider a task ecosystem with shared task graph $(V, E, s, d)$. For subtask type $v$, let $T_v$ denote the time cost of completing subtask type $v$ given fully upskilled workers, i.e. $e_{iv} = e_{iu} = e_{\text{normal}}$ for all $u \in N(v)$. Let $P \subseteq N(v)$ denote the prerequisite subtask types of $v$ that are LPP-accessible, and let $\alpha_v$ denote the fraction of $T_v$ that fully upskilled workers would need to spend on context costs for subtask types in $P$. Then $v$ is trivially accessible if and only if $T_v \leq \frac{\tau}{M}$ and $v$ is \textit{LPP-accessible} if and only if,
% let $C_v = \sum_{u \in N(v)} \frac{s_ud_{uv}}{e_{\text{normal}}}$ denote the context cost, let $T_v = D_v + C_v$ denote the total cost, let $\Delta_v = T_v - \tau$ denote the amount at which the total cost exceeds the required time, and let $\alpha_v = \frac{C_v}{T_v}$ denote the context cost ratio, i.e. the fraction of the total cost consisting of context costs.
    \begin{align*}
        %T_v &\leq \tau,\\
        \alpha_v &\geq \frac{MT_v - \tau}{MT_v - T_v}
    \end{align*}
\end{theorem}
\begin{proof}
     Since $P$ is defined as subtask types that are LPP-accessible, Definition \ref{def:lpp-accessible} tells us that $v$ is LPP-accessible so long as $\frac{s_v}{e_{\text{novice}}} + \sum_{u \in N(v) \setminus P} \frac{s_ud_{uv}}{e_{\text{novice}}} + \sum_{u \in N(v)\cap P} \frac{s_ud_{uv}}{e_{\text{normal}}}\leq \tau$. Careful work shows this to be equivalent to $MT_v - (M-1)\alpha_vT_v \leq \tau$, which gives us our result (see Appendix).
\end{proof}

Given that large tasks and significant interdependencies are obstacles for novices, it is not surprising that breaking down tasks so that $T_v \leq \frac{\tau}{M}$ will make them accessible to novices. However, Theorem \ref{thm:lpp-accessible} also points out that if these small tasks do not help bridge the gap towards more complex ones, they don't help beyond providing expertise in commoditized tasks that anyone could do.
For LPP to create \textit{pathways} from novice to expert tasks, one actually needs sufficient interdependence ($\alpha_v \geq \frac{MT_v - \tau}{MT_v - T_v}$). There needs to be subtask types that, despite having a large total cost, have much of that cost tied up in context costs that can be reduced through LPP-based upskilling.
LPP has been likened to moving through layers of an onion representing progressive sets of a task~\cite{Yunwen_Ye2003-ir}. Our model would say that defining such layers is not just about progressing in difficulty, but about particular patterns of interdependence. One needs ecosystems where involvement in early layers reduces context for later ones. %Work on micro-role hierarchies for situated learning~\citep{Lee2019-ek} can be viewed from this perspective.

\section{Coordination Intensity, Wages, and AI}\label{sec:coordination-intensity}

We now turn to traditional work as a context where complex collaborative work is the norm. We have already seen that interdependencies increase the value of expertise. Assuming all else is equal, we would expect firms to put more money into hiring high-expertise workers for tasks with higher interdependence than for those where expertise is not as critical, and that this would be reflected in wages.

\subsection{Modeling occupations and coordination intensity}

To explore this intuition, we model the coordination intensity of occupations, derive its relationship with hourly wages in a simple setting, and show that our theoretical analyses are consistent with empirical analyses of occupational employment and wage data in the U.S. economy.

\subsubsection{Occupations, expertise, and wages}\label{sec:model-coordination}

We represent occupation $O \subseteq V$ as a set of subtasks assigned to the same worker with $s_O = \sum_{v \in O}s_v$ denoting the size of occupation $O$. %, worker expertise as dependent on occupations, and occupational wages as dependent on occupational expertise. Specifically,
\begin{definition}
For occupation $O$, we say that a worker $i$ has occupational expertise $e_{iO}$ for occupation $O$ if $e_{iv} = e_{iO}$ for all $v \in O$. 
We assume workers are assigned to one occupation and have baseline expertise for all other subtasks. %\footnote{Lifting this is interesting for studying interdisciplinarity.}.

\end{definition}
\begin{definition}\label{def:market-wage}
For occupation $O$, the wage function $w_O : \mathbb{R}^+ \to \mathbb{R}^+$ maps occupational expertise to wages (per unit time) for a worker at that expertise level, and is of the form
\begin{align*}
    w_O(x) = x^d \quad\textit{for $d > 1$}.
\end{align*}
\end{definition}
The constraint $d > 1$ follows from the fact that a worker with expertise $ce$ can provide at least the same value as $c$ workers of expertise $e$, so $w_O(c\cdot e) \geq c\cdot w_O(e)$ for $c \geq 1$.

\subsubsection*{Coordination intensity}

We model coordination intensity $C_{iO}$ as the time worker $i$ spends on context due to interdependencies between subtasks of their own occupation and subtasks assigned to others. It is these interdependencies that manifest themselves as observable coordination costs that would be reflected in empirical work activity data. Formally,

\begin{definition}\label{def:coordination-intensity}
For occupation $O$ and worker $i$, the coordination intensity is
\begin{align*}
    C_{iO} = \sum_{u \in V}\max_{v: v \in O, u \in N(v)} \frac{s_u d_{uv}}{e_{iu}}.
\end{align*}
Because workers are assumed to have baseline expertise for subtasks outside their occupation, this reduces coordination intensity to the following worker-independent expression\footnote{We note that the $\max$ in this expression comes from making sure we are not double-counting context costs. The expression sums over all subtasks for which we need to pay context costs (the prerequisite subtasks of subtasks in the occupation). 
For each such subtask $u$, the required context cost depends on the $d_{uv}$ that imposes the highest context cost (max $d_{uv}$ for $v \in O$).},
\begin{align*}
    C_O = \sum_{u \in V}\max_{v: v \in O, u \in N(v)} s_u d_{uv}.
\end{align*}
\end{definition}

\subsection{Theoretical analysis of coord. intensity and wages}

We can now discuss a firm's decision to recruit workers at differing expertise levels for a set of required occupations needed to make a given project deadline $\tau$. 

\begin{definition}
We represent a simple dynamic for occupational recruitment with the tuple $(\mathcal{O}, \tau)$, where $\mathcal{O}$ represents the set of occupations that a firm is recruiting workers for and $\tau$ representing the time remaining till the project deadline at which recruited workers need to complete their work. We use $\rho_O = \frac{s_O}{\tau - C_O}$ to denote the minimum expertise required for a worker to complete $O$ by the project deadline.
\end{definition}
Given a set of occupations $\mathcal{O}$, we use $C^* = \max_{O \in \mathcal{O}} C_O$ to denote the occupation with highest coordination intensity, $O^*$ to denote the corresponding occupation, and $\rho_{O^*}$ to denote the expertise required for a worker to finish the maximum coordination intensity occupation by the deadline.

% \begin{lemma}
% Consider occupational recruitment with parameters $(\mathcal{O}, \tau, J)$. Let $\rho_O$ denote the minimum expertise required for a worker to complete occupation $O$ by the project deadline. Then,
% % \begin{align*}
% %     e_{iO} \geq \frac{J}{\tau - C(O)}
% % \end{align*}
% \begin{align*}
%     \rho_O = \frac{1}{2}\text{HM}\left(e_{\text{max}}, \frac{J}{C_{\text{max}} - C(O)}\right),\label{eqn:coord}
% \end{align*}
% where HM denotes the harmonic mean.
% \end{lemma}
% \begin{proof}
%     For any worker $i$ to be able to complete tasks in $O$, their expertise for that occupation multipled by the time that remains after context costs must be at least $J$, in other words $e_{iO}(\tau-C(O)) \geq J$. By the definition of $e_{\max}, C_{\max}$ and following the same logic, we have $\tau = \frac{J}{e_{max}} + C_{max}$. Plugging this into the former equation and solving for $e_{iO}$, we get %$e_{iO}\left(\frac{J}{e_{max}} + C_{max} - C(O)\right) \geq J \implies
%     $e_{iO} \geq \frac{J}{\frac{J}{e_{\max}} + (C_{\max} - C(O))} = \frac{1}{2}\text{HM}\left(e_{\text{max}}, \frac{J}{C_{\text{max}} - C(O)}\right)$. Since $\rho_O$ is defined as the minimum expertise required, we are done.
% \end{proof}

\begin{theorem}\label{thm:coord-wages}
    Consider occupational recruitment with parameters $(\mathcal{O}, \tau)$ and let $J = \max_{O \in \mathcal{O}}\frac{s_{O^*}}{s_O}$ denote the multiplicative gap between the largest and smallest occupational sizes. Then if $\alpha = \frac{C_O}{C^*} \leq \frac{1}{2}$ and $C^* \geq \frac{2}{3}\tau$, we have%\footnote{Similar results can be obtained for other constants.},
    \begin{align*}
         \alpha + \ln\left(\frac{1}{2J}\frac{s_{O^*}}{C^*}\right) \leq \ln(\rho_O) \leq 2\alpha + \ln\left(\frac{s_{O^*}}{C^*}\right).%\label{eqn:coord-approx}
    \end{align*}
\end{theorem}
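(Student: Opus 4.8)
The plan is to unfold the definitions of $\rho_O$, $C_O$, and $C^*$ into a single clean expression for $\ln(\rho_O)$ and then reduce each of the two target inequalities to an elementary one-variable statement. Writing $\rho_O = s_O/(\tau - C_O)$, substituting $C_O = \alpha C^*$, and introducing $\beta := \tau/C^*$, I would factor $\tau - C_O = C^*(\beta - \alpha)$ to obtain
\begin{align*}
    \ln(\rho_O) = \ln\!\left(\frac{s_{O^*}}{C^*}\right) + \ln\!\left(\frac{s_O}{s_{O^*}}\right) - \ln(\beta - \alpha).
\end{align*}
The hypotheses translate directly into parameter ranges. We have $\alpha \le \tfrac12$ by assumption; and since $\tfrac23\tau \le C^* < \tau$ — the upper bound $C^* < \tau$ being forced by feasibility of $O^*$, as $\rho_{O^*} = s_{O^*}/(\tau - C^*)$ must be finite and positive — we get $\beta \in (1,\tfrac32]$. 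In particular $\beta - \alpha > 0$, so every logarithm is well defined.

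For the upper bound the term $\ln(s_{O^*}/C^*)$ already matches the target, so after cancellation it suffices to show $\ln(s_O/s_{O^*}) - \ln(\beta - \alpha) \le 2\alpha$. Using $s_O \le s_{O^*}$ (so the size log is nonpositive), this reduces to $-\ln(\beta-\alpha) \le 2\alpha$, i.e.\ $\beta - \alpha \ge e^{-2\alpha}$; and since $\beta \ge 1$ it is enough to verify $1 - \alpha \ge e^{-2\alpha}$ on $[0,\tfrac12]$. For the lower bound I would expand $\ln\!\big(\tfrac{1}{2J}\tfrac{s_{O^*}}{C^*}\big) = \ln(s_{O^*}/C^*) - \ln(2J)$ and cancel the common term, reducing the claim to $\alpha - \ln(2J) \le \ln(s_O/s_{O^*}) - \ln(\beta-\alpha)$. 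Here the defining bound $s_{O^*}/s_O \le J$ lets me replace $\ln(s_{O^*}/s_O)$ by $\ln J$, collapsing $-\ln(2J) + \ln J = -\ln 2$; what remains is $\beta - \alpha \le 2e^{-\alpha}$, and since $\beta \le \tfrac32$ it suffices to check $\tfrac32 - \alpha \le 2e^{-\alpha}$ on $[0,\tfrac12]$.

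I expect the only genuine content to be verifying these two single-variable inequalities, $1 - \alpha \ge e^{-2\alpha}$ and $\tfrac32 - \alpha \le 2e^{-\alpha}$, which I would handle by examining the gap functions $g(\alpha) = 1 - \alpha - e^{-2\alpha}$ and $h(\alpha) = 2e^{-\alpha} - \tfrac32 + \alpha$: checking $g(0)=0$, $h(0)=\tfrac12>0$, and the sign of their derivatives shows both are nonnegative throughout $[0,\tfrac12]$. Rather than a deep obstacle, the main subtlety is the implicit use of $s_O \le s_{O^*}$ in the upper bound — precisely the reading that $O^*$ is the largest-size occupation, consistent with $J$ being described as the ratio between the largest and smallest occupational sizes — which I would state explicitly as a standing assumption before invoking it.
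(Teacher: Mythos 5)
Your proposal is correct, and it reaches the result by a genuinely different route than the paper. The paper first proves a separate lemma expressing $\rho_O$ as $\frac{1}{2}\frac{s_O}{s_{O^*}}\,\mathrm{HM}\bigl(\rho_{O^*}, \frac{s_{O^*}}{C^*-C_O}\bigr)$ (splitting $\tau - C_O = (\tau - C^*) + (C^*-C_O)$), uses the hypotheses $\alpha \le \tfrac12$ and $C^* \ge \tfrac23\tau$ to show that $\frac{s_{O^*}}{C^*-C_O}$ is the smaller argument, sandwiches $\rho_O$ between $\frac{1}{2}\frac{s_O}{C^*-C_O}$ and $\frac{s_O}{C^*-C_O}$ via $\min(a,b)\le \mathrm{HM}(a,b)\le 2\min(a,b)$, and finishes with $\frac{x}{1+x}\le\ln(1+x)\le x$; notably it never uses $\beta = \tau/C^* \le \tfrac32$ again after establishing which argument of the harmonic mean is smaller. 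You instead substitute $\tau - C_O = C^*(\beta-\alpha)$ directly and push everything into the two scalar inequalities $1-\alpha \ge e^{-2\alpha}$ and $\tfrac32 - \alpha \le 2e^{-\alpha}$ on $[0,\tfrac12]$; the first of these is exactly the paper's $\ln\frac{1}{1-\alpha}\le\frac{\alpha}{1-\alpha}\le 2\alpha$ in disguise, while your second differs from the paper's (which reduces to the universally true $1-\alpha\le e^{-\alpha}$, with the factor of $2$ absorbed by the harmonic-mean slack rather than by $\beta\le\tfrac32$). Your route is more elementary and self-contained; the paper's buys a reusable identity relating $\rho_O$ to $\rho_{O^*}$. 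Two small cautions: in verifying $g(\alpha)=1-\alpha-e^{-2\alpha}\ge 0$ note that $g'$ changes sign at $\alpha = \tfrac{\ln 2}{2}$, so you must also check the right endpoint $g(\tfrac12)=\tfrac12 - e^{-1}>0$ rather than rely on monotonicity alone; and your flagged assumption $s_O \le s_{O^*}$ is indeed needed (the paper's own proof uses it silently in its final step $\ln(s_O/C^*)\le\ln(s_{O^*}/C^*)$, even though $O^*$ is formally defined by maximal coordination intensity, not maximal size), so stating it explicitly is the right call.
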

\begin{proof}
    To get our approximation for $\alpha \leq \frac{1}{2}$, we first show that $\rho_O = \frac{1}{2}\frac{s_O}{s_{O^*}}\text{HM}\left(\rho_{O^*}, \frac{s_{O^*}}{C^* - C_O}\right)$, where HM denotes the harmonic mean. We then note that the harmonic mean satisfies $\min(a,b) \leq HM(a,b) \leq 2\min(a,b)$; that $\frac{J}{C_{\max} - C(O)} \leq e_{\max}$ when $C_{\max} \geq \frac{2}{3}\tau$; and that $\frac{x}{1+x} \leq \ln(1+x) \leq x$ for $x > 0$ (see Appendix for details).
\end{proof}

Since $s_{O^*}, C^*, J$ are constants, Theorem \ref{thm:coord-wages} shows a linear relationship between $\alpha = C_O/C^*$ and $\ln(\rho_O)$. For a market wage function satisfying Definition \ref{def:market-wage}, i.e. $w_O(x) = x^d$ for $d > 1$, it follows that $\ln(w_O) = d\ln(\rho_O)$, giving a linear relationship between coordination intensity $C_O$ and log wages $\ln(w_O)$. As we will see, this fits our empirical analysis well. % which finds a correlation of $0.77$ for coordination intensity and logged median hourly wages (see \textbf{Figure \ref{fig:scatter_coordination_loccemp}}). %\footnote{We briefly bring up a point of caution in relating our model to the empirical data. The units for coordination intensity in our empirical results are based on ordinal responses which we are interpreting as cardinal values. However, we think it is reasonable that intensity and frequency responses would have some resemblance to their cardinal interpretation.} 

\begin{figure}[h] \centering 
	\includegraphics[width=0.7\linewidth]{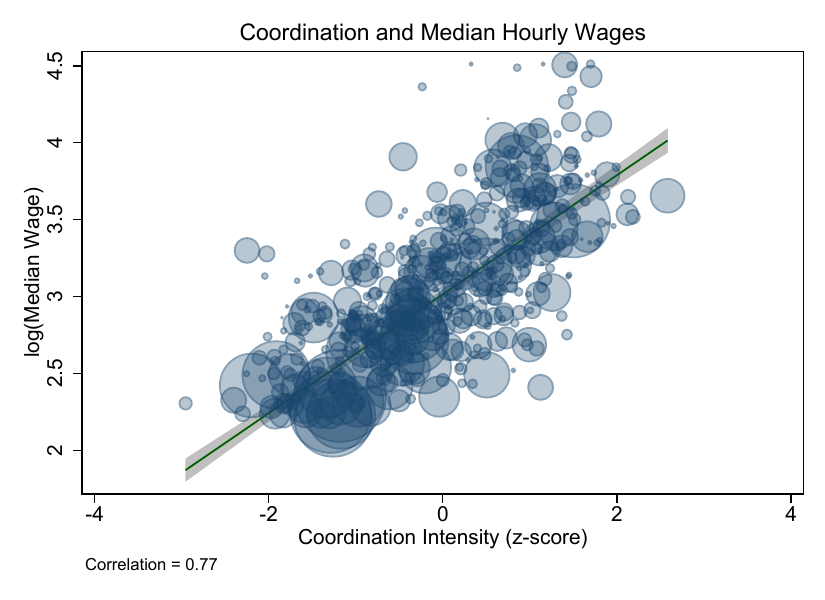}
	\caption{\small %Occupational Coordination Intensity and Hourly Wages. 
 Sources: O*NET (2014-2018) and the Occupational Employment Statistics (2010-2018). The plot depicts the relationship between the intensity of work activities requiring coordination for an occupation with logged median hourly wages deflated using the 2012 PCE index. %The right plot has the same x-axis plotted against logged ratio of hourly wages at the 90th and 10th percentile. 
    Each circle represents one occupation, with size representing occupational employment. 
    To construct a comprehensive measure of coordination intensity, we take the unweighted average of the following work activity sub-indices: ``Getting Information,'' ``Monitor Processes, Materials, or Surroundings,'' ``Processing Information,'' ``Communicating with Supervisors, Peers, or Subordinates,'' ``Organizing, Planning, and Prioritizing Work,'' and ``Coordinating the Work and Activities of Others.'' We have experimented with alternative formulations of our coordination index and our results are robust to the exclusion or addition of certain terms. Observations are weighted by occupational employment through matching the data on coordination intensities with the Occupational Employment and Wage Statistics (OEWS) at an occupation level of aggregation. The OEWS provides nationally representative employment counts and hourly wages at a six-digit standard occupational classification (SOC) level since 2005.
} \label{fig:scatter_coordination_loccemp}
\end{figure}

\begin{figure*}[t]
\centering
\includegraphics[width=\textwidth]{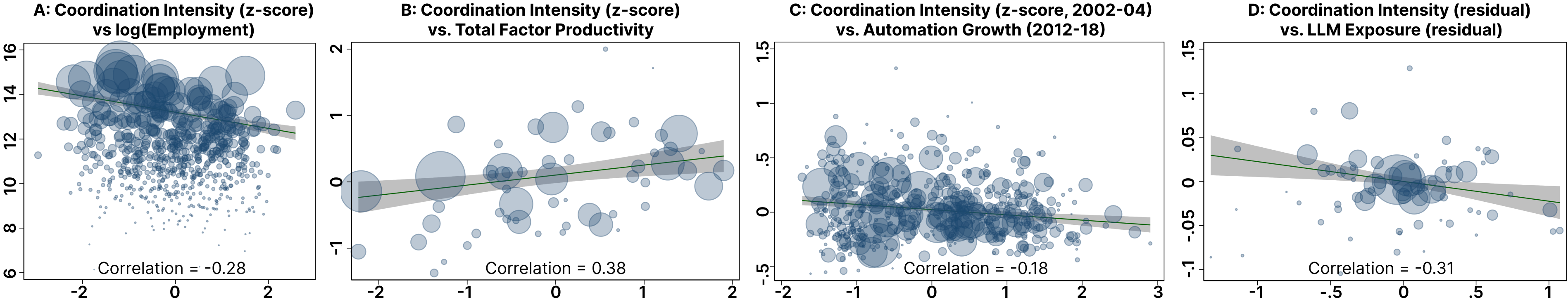} % Reduce the figure size so that it is slightly narrower than the column.
\caption{\small \textbf{Panel A} plots the relationship between coordination intensity and employment at a six-digit occupational level from 2014-2018. \textbf{Panel B} plots the relationship between employment-weighted coordination intensity and total factor productivity (TFP) across 46 industries. We construct TFP by taking the residual from a regression of logged real GDP in 2012 prices on logged capital expenditures and logged employment. To map our index of occupational coordination intensity into a three-digit NAICS code, we create weighted averages using employment at the industry $\times$ occupation level. \textbf{Panel C} plots the relationship between coordination intensity and the growth in automation as a work characteristic between 2002 and 2018 at a six-digit occupational level. \textbf{Panel D} plots the relationship between coordination intensity and exposure to LLMs across 71 three-digit NAIC industries, residualizing both variables on demographic controls and skill intensities that are contained in Tables S1-S4 in the SA. We use the OESW data at the industry and occupational level to obtain employment to construct an employment-weighted industry index of coordination intensity at the three-digit NAICS level to match the OpenAI index.}
\label{fig:coordination-graphs}
\end{figure*}

\subsection{Empirical analysis of coord. intensity and wages}\label{sec:coordination-and-wages-empirical}

We used the O*NET database to conduct an empirical analysis of coordination intensity and wages.
%, which is the new companion to the Dictionary of Occupational Titles~\cite{AutorLevyMurnane2003}. 
O*NET is a survey that the U.S. Department of Labor administers to a random sample of U.S. workers within occupations. Respondents answer questions on an ordinal scale that generally measures both the importance of a task and the frequency at which different tasks occur on the job. We focus on work activities, taking the product of the importance and frequency weights (when available) to generate an overall intensity for each sub-index task \citep{AutorHandel2013}. We gather every wave of the survey between 2002 and 2018 to produce a harmonized time series. More data details in figure caption.

We found a strong positive correlation $(\rho = 0.77)$ between the intensity of coordination-related work activities and logged median wages (\textbf{Figure \ref{fig:scatter_coordination_loccemp}}), showing that occupations with higher coordination intensity tend to have a higher wage. A standard deviation (1sd) increase in coordination intensity is associated with a 46\% rise in hourly wages. To demonstrate this correlation is not spurious, Table 1 in the Appendix presents the results in a multivariate setting with controls over occupational composition and skill intensities.

% \begin{figure}[!ht] \centering  % [h!]
%     \includegraphics[width=\linewidth]{scatter_coordination_loccemp.pdf}
%     \includegraphics[width=\linewidth]{scatter_ltfp_coordinate.pdf}
%     \caption{\small Coordination Intensity and Employment and Productivity. Sources: O*NET, Bureau of Economic Analysis, and the Occupational Employment and Wage Statistics. Panel A plots the relationship between coordination intensity and employment at a six-digit occupational level from 2014-2018. %Our measure of coordination intensity is the unweighted average of the following work activity sub-indices: ``Getting Information,'' ``Monitor Processes, Materials, or Surroundings,'' ``Processing Information,'' ``Communicating with Supervisors, Peers, or Subordinates,'' ``Organizing, Planning, and Prioritizing Work,'' and ``Coordinating the Work and Activities of Others.'' 
%     Panel B plots the relationship between employment-weighted coordination intensity and total factor productivity (TFP) across 46 industries. TFP is obtained as a residual of log real GDP regressed on log employment and capital. All observations are weighted by occupational employment.}
%     \label{fig:scatter_ltfp_coordinate}
% \end{figure}

\subsection{Further empirical results on coordination intensity}

Some have suggested that understanding the impact of AI on labor will require connections between microlevel models of work and macrolevel models of markets~\cite{Frank2019-tg}. While our analysis only lightly illustrated this (connecting a microlevel model of interdependencies with macrolevel insights on coordination intensity and wages), it highlights the opportunity of models that emphasize collaborative capacities of workers. To future motivate this, we conclude with additional empirical analyses relating coordination intensity to employment, productivity, and exposure to AI. We use the same measure of occupational coordination intensity. Further data details are included in the figure caption.

\subsubsection*{Coordination Intensity, Employment, and Productivity}

In Figure \ref{fig:coordination-graphs}, \textbf{Panel A} shows a negative correlation with employment $(\rho = -0.28)$. A 1sd increase in coordination intensity is associated with a 0.36\% decline in employment. This is consistent with intuition that higher levels of coordination costs would lead one to hire a smaller number of top experts. \textbf{Panel B} shows a positive correlation with total factor productivity (TFP) ($\rho = 0.37$). A 1sd rise in coordination intensity is associated with a 0.15\% rise in TFP. This is consistent with intuition that tasks with higher coordination costs would be harder to disaggregate and commoditize, so would have higher productivity.
In unreported results, we find an even stronger correlation with value added per worker, a proxy for labor productivity ($\rho = 0.54$). We find that a 1sd rise in coordination intensity is associated with a 0.15\% rise in TFP and a 0.32\% rise in value added per worker\footnote{The results are robust to alternative weighting schemes. For example, an unweighted average across occupations in the same industry produce correlations of 0.16 and 0.47 between coordination intensity and TFP and value added per worker. Employment-weighted averages produce correlations of 0.39 and 0.52, respectively.}. 
Tables 1-2 in the Appendix replicate these results in a multivariate context controlling for demographic characteristics and skill intensities.

% \begin{figure}[!ht] \centering  % [h!]
%     \includegraphics[width=\linewidth]{scatter_coordinate_dlautomate.pdf}
%     \includegraphics[width=\linewidth]{scatter_coordinateRES_exposure.pdf}
%     %\caption{Coordination Intensity and Growth in Automation / Exposure to LLMs} 
%     \caption{\small Coordination Intensity and Growth in Automation / Exposure to LLMs. Sources: O*NET, Eloundou et al. (2023), and the Occupational Employment and Wage Statistics. Panel A plots the relationship between coordination intensity and the growth in automation as a work characteristic between 2002 and 2018 at a six-digit occupational level. %Our measure of coordination intensity is the unweighted average of the following work activity sub-indices: ``Getting Information,'' ``Monitor Processes, Materials, or Surroundings,'' ``Processing Information,'' ``Communicating with Supervisors, Peers, or Subordinates,'' ``Organizing, Planning, and Prioritizing Work,'' and ``Coordinating the Work and Activities of Others.'' 
%     Panel B plots the relationship between coordination intensity and exposure to LLMs across 71 three-digit NAIC industries, residualizing both variables on demographic controls and skill intensities that are contained in Tables S1-S4 in the SA. We use the OESW data at the industry and occupational level to obtain employment to construct an employment-weighted industry index of coordination intensity at the three-digit NAICS level to match the OpenAI index. All observations are weighted by occupational employment.}
%     \label{fig:scatter_coordinate_dlautomate}
% \end{figure}

\subsubsection*{Coordination Intensity and Exposure to AI}

In Figure \ref{fig:coordination-graphs}, \textbf{Panel C} plots the relationship between coordination intensity in 2002-2004 and the growth in automation from 2002 to 2018, showing a negative correlation ($\rho = -0.18$). A 1sd rise in coordination intensity is associated with a 4 percentage point (pp) drop in growth in automation---economically meaningful given that the average occupation grows by 3.2pp in automation intensity. This is consistent with intuition that occupations with higher coordination intensity would be less exposed to AI because they would be harder to decompose and automate. Table 3 in the Appendix replicates these results in a multivariate context controlling for demographic characteristics and skill intensities.

Next, we explore the relationship between coordination intensity and large language model (LLM) exposure. However, the raw correlation would be misleading because many higher skilled occupations are more exposed to LLMs~\citep{openai:laborimpact}. After controlling for demographic characteristics and skill intensities across industries, \textbf{Panel D} plots the relationship between \textit{residualized} coordination intensity and LLM exposure, producing a negative correlation ($\rho = -0.31$). We used OpenAI's data on LLM exposure~\cite{openai:laborimpact} created from detailed human coding of occupational tasks from O*NET to determine the share of an occupations tasks for which LLM technologies could reduce the time it takes to complete the task by at least 50\%. While they have not published individual occupational exposures, they did publish employment-weighted industry-level exposures at the three-digit NAICS level, which we use. Table 4 in the Appendix begins with the raw (positive) correlation, showing that sequentially adding controls to mitigate bias flips this to a negative correlation. These results suggest that jobs with greater coordination intensity are less, not more, exposed to LLMs. % and that advances in generative AI may even reduce barriers to coordination in the completion of complex work.

\section{Conclusion}\label{sec:conclusion}

In this paper, we argued for models of crowd work that capture the unique human capacity for collaboration. 
We introduced a graph-based model building on the concepts of task complexity and interdependencies to capture the challenge of dividing up work. We illustrated how such a model can provide insights on the limits of social-media based mobilization, trends in complex crowd work, the outsized importance of expert workers, the role of interdependencies in legitimate peripheral participation, and the relationship between coordination intensity and wages.

Our model makes it possible to ask a number of questions that we view as promising directions. What generative models best represent the structure of complex work? 
Can collaborative structures or new AI technologies help reduce the cost of context? How might we create work ecosystems that support continuous on the job learning? To reach a human-centered vision of Human-AI collaboration and to reach a \textit{``future crowd workplace in which we would want our children to participate''}~\cite{Kittur2013-qr}, we will need models that can help us to reason about uniquely human capacities for collaborative work, that bridge models of crowd work and traditional work, and that make it possible to ask how algorithms can play an augmenting role in these dynamics.

% In the interest of anonymization, please do not include acknowledgements in your submission.
%
%\begin{acks}
%
%	The authors would like to thank Dr. Maura Turolla of Telecom
%	Italia for providing specifications about the application scenario.
%
%	The work is supported by the \grantsponsor{GS501100001809}{National
%		Natural Science Foundation of
%		China}{http://dx.doi.org/10.13039/501100001809} under Grant
%	No.:~\grantnum{GS501100001809}{61273304\_a}
%	and~\grantnum[http://www.nnsf.cn/youngscientsts]{GS501100001809}{Young
%		Scientsts' Support Program}.
%
%
%\end{acks}

% Bibliography
\bibliographystyle{ACM-Reference-Format}
\bibliography{human-computation}

%%% -*-BibTeX-*-
%%% Do NOT edit. File created by BibTeX with style
%%% ACM-Reference-Format-Journals [18-Jan-2012].

\begin{thebibliography}{52}

%%% ====================================================================
%%% NOTE TO THE USER: you can override these defaults by providing
%%% customized versions of any of these macros before the \bibliography
%%% command.  Each of them MUST provide its own final punctuation,
%%% except for \shownote{}, \showDOI{}, and \showURL{}.  The latter two
%%% do not use final punctuation, in order to avoid confusing it with
%%% the Web address.
%%%
%%% To suppress output of a particular field, define its macro to expand
%%% to an empty string, or better, \unskip, like this:
%%%
%%% \newcommand{\showDOI}[1]{\unskip}   % LaTeX syntax
%%%
%%% \def \showDOI #1{\unskip}           % plain TeX syntax
%%%
%%% ====================================================================

\ifx \showCODEN    \undefined \def \showCODEN     #1{\unskip}     \fi
\ifx \showDOI      \undefined \def \showDOI       #1{#1}\fi
\ifx \showISBNx    \undefined \def \showISBNx     #1{\unskip}     \fi
\ifx \showISBNxiii \undefined \def \showISBNxiii  #1{\unskip}     \fi
\ifx \showISSN     \undefined \def \showISSN      #1{\unskip}     \fi
\ifx \showLCCN     \undefined \def \showLCCN      #1{\unskip}     \fi
\ifx \shownote     \undefined \def \shownote      #1{#1}          \fi
\ifx \showarticletitle \undefined \def \showarticletitle #1{#1}   \fi
\ifx \showURL      \undefined \def \showURL       {\relax}        \fi
% The following commands are used for tagged output and should be
% invisible to TeX
\providecommand\bibfield[2]{#2}
\providecommand\bibinfo[2]{#2}
\providecommand\natexlab[1]{#1}
\providecommand\showeprint[2][]{arXiv:#2}

\bibitem[Almaatouq et~al\mbox{.}(2021)]%
        {Almaatouq2021-lo}
\bibfield{author}{\bibinfo{person}{Abdullah Almaatouq},
  \bibinfo{person}{Mohammed Alsobay}, \bibinfo{person}{Ming Yin}, {and}
  \bibinfo{person}{Duncan~J Watts}.} \bibinfo{year}{2021}\natexlab{}.
\newblock \showarticletitle{Task complexity moderates group synergy}.
\newblock \bibinfo{journal}{\emph{Proc. Natl. Acad. Sci. U. S. A.}}
  \bibinfo{volume}{118}, \bibinfo{number}{36} (\bibinfo{date}{Sept.}
  \bibinfo{year}{2021}).
\newblock


\bibitem[Anthes(2017)]%
        {Anthes2017-gl}
\bibfield{author}{\bibinfo{person}{Emily Anthes}.}
  \bibinfo{year}{2017}\natexlab{}.
\newblock \showarticletitle{The shape of work to come}.
\newblock \bibinfo{journal}{\emph{Nature}} \bibinfo{volume}{550},
  \bibinfo{number}{7676} (\bibinfo{date}{Oct.} \bibinfo{year}{2017}),
  \bibinfo{pages}{316--319}.
\newblock


\bibitem[Autor and Handel(2013)]%
        {AutorHandel2013}
\bibfield{author}{\bibinfo{person}{David~H. Autor} {and}
  \bibinfo{person}{Michael~J. Handel}.} \bibinfo{year}{2013}\natexlab{}.
\newblock \showarticletitle{{Putting tasks to the test: Human capital, job
  tasks and wages}}.
\newblock \bibinfo{journal}{\emph{Journal of Labor Economics}}
  \bibinfo{volume}{31}, \bibinfo{number}{2} (\bibinfo{year}{2013}),
  \bibinfo{pages}{S59--S96}.
\newblock


\bibitem[Barowy et~al\mbox{.}(2012)]%
        {Barowy2012-fg}
\bibfield{author}{\bibinfo{person}{Daniel~W Barowy}, \bibinfo{person}{Charlie
  Curtsinger}, \bibinfo{person}{Emery~D Berger}, {and} \bibinfo{person}{Andrew
  McGregor}.} \bibinfo{year}{2012}\natexlab{}.
\newblock \showarticletitle{{AutoMan}: A Platform for Integrating Human-based
  and Digital Computation}. In \bibinfo{booktitle}{\emph{Proc Conf on Object
  Oriented Programming Systems Languages and Applications}} (Tucson, Arizona,
  USA) \emph{(\bibinfo{series}{OOPSLA '12})}. \bibinfo{publisher}{ACM},
  \bibinfo{address}{New York, NY}, \bibinfo{pages}{639--654}.
\newblock


\bibitem[Benade et~al\mbox{.}(2017)]%
        {Benade2017-il}
\bibfield{author}{\bibinfo{person}{Gerdus Benade}, \bibinfo{person}{Swaprava
  Nath}, \bibinfo{person}{Ariel~D Procaccia}, {and} \bibinfo{person}{Nisarg
  Shah}.} \bibinfo{year}{2017}\natexlab{}.
\newblock \showarticletitle{Preference elicitation for participatory
  budgeting}. In \bibinfo{booktitle}{\emph{{Thirty-First} {AAAI} Conference on
  Artificial Intelligence}}. \bibinfo{publisher}{aaai.org}.
\newblock


\bibitem[Bils and Cho(1994)]%
        {BilsCho1994:cyclic}
\bibfield{author}{\bibinfo{person}{Mark Bils} {and} \bibinfo{person}{Jang-Ok
  Cho}.} \bibinfo{year}{1994}\natexlab{}.
\newblock \showarticletitle{Cyclical Factor Utilization}.
\newblock \bibinfo{journal}{\emph{Journal of Monetary Economics}}
  \bibinfo{volume}{33} (\bibinfo{year}{1994}), \bibinfo{pages}{319--354}.
\newblock


\bibitem[Brynjolfsson et~al\mbox{.}(2023)]%
        {Brynjolfsson:generative}
\bibfield{author}{\bibinfo{person}{Erik Brynjolfsson},
  \bibinfo{person}{Danielle Li}, {and} \bibinfo{person}{Lindsey~R. Raymond}.}
  \bibinfo{year}{2023}\natexlab{}.
\newblock \showarticletitle{Generative AI at work}.
\newblock \bibinfo{journal}{\emph{NBER working paper}} (\bibinfo{year}{2023}).
\newblock


\bibitem[Brynjolfsson and McAfee(2014)]%
        {Brynjolfsson2014-xw}
\bibfield{author}{\bibinfo{person}{Erik Brynjolfsson} {and}
  \bibinfo{person}{Andrew McAfee}.} \bibinfo{year}{2014}\natexlab{}.
\newblock \bibinfo{booktitle}{\emph{The Second Machine Age: Work, Progress, and
  Prosperity in a Time of Brilliant Technologies}}.
\newblock \bibinfo{publisher}{W. W. Norton \& Company}.
\newblock


\bibitem[Cebrian et~al\mbox{.}(2016)]%
        {Cebrian2016-oz}
\bibfield{author}{\bibinfo{person}{Manuel Cebrian}, \bibinfo{person}{Iyad
  Rahwan}, {and} \bibinfo{person}{Alex~``sandy'' Pentland}.}
  \bibinfo{year}{2016}\natexlab{}.
\newblock \showarticletitle{Beyond Viral}.
\newblock \bibinfo{journal}{\emph{Commun. ACM}} \bibinfo{volume}{59},
  \bibinfo{number}{4} (\bibinfo{date}{March} \bibinfo{year}{2016}),
  \bibinfo{pages}{36--39}.
\newblock


\bibitem[Chen et~al\mbox{.}(2016)]%
        {Chen2016-fw}
\bibfield{author}{\bibinfo{person}{Yiling Chen}, \bibinfo{person}{Arpita
  Ghosh}, \bibinfo{person}{Michael Kearns}, \bibinfo{person}{Tim Roughgarden},
  {and} \bibinfo{person}{Jennifer~Wortman Vaughan}.}
  \bibinfo{year}{2016}\natexlab{}.
\newblock \showarticletitle{Mathematical Foundations for Social Computing}.
\newblock \bibinfo{journal}{\emph{Commun. ACM}} \bibinfo{volume}{59},
  \bibinfo{number}{12} (\bibinfo{date}{Dec.} \bibinfo{year}{2016}),
  \bibinfo{pages}{102--108}.
\newblock


\bibitem[Cranshaw et~al\mbox{.}(2017)]%
        {Cranshaw2017-et}
\bibfield{author}{\bibinfo{person}{Justin Cranshaw}, \bibinfo{person}{Emad
  Elwany}, \bibinfo{person}{Todd Newman}, \bibinfo{person}{Rafal Kocielnik},
  \bibinfo{person}{Bowen Yu}, \bibinfo{person}{Sandeep Soni},
  \bibinfo{person}{Jaime Teevan}, {and} \bibinfo{person}{Andr{\'e}s
  Monroy-Hern{\'a}ndez}.} \bibinfo{year}{2017}\natexlab{}.
\newblock \showarticletitle{{Calendar.Help}: Designing a {Workflow-Based}
  Scheduling Agent with Humans in the Loop}. In \bibinfo{booktitle}{\emph{Proc
  2017 {CHI} Conf on Human Factors in Computing Systems}} (Denver, Colorado,
  USA) \emph{(\bibinfo{series}{CHI '17})}. \bibinfo{publisher}{ACM},
  \bibinfo{address}{New York, NY}, \bibinfo{pages}{2382--2393}.
\newblock


\bibitem[Eloundou et~al\mbox{.}(2023)]%
        {openai:laborimpact}
\bibfield{author}{\bibinfo{person}{Tyna Eloundou}, \bibinfo{person}{Sam
  Manning}, \bibinfo{person}{Pamela Mishkin}, {and} \bibinfo{person}{Daniel
  Rock}.} \bibinfo{year}{2023}\natexlab{}.
\newblock \showarticletitle{GPTs are GPTs: An Early Look at the Labor Market
  Impact Potential of Large Language Models}.
\newblock \bibinfo{journal}{\emph{arXiv}} (\bibinfo{year}{2023}).
\newblock


\bibitem[Frank et~al\mbox{.}(2019)]%
        {Frank2019-tg}
\bibfield{author}{\bibinfo{person}{Morgan~R Frank}, \bibinfo{person}{David
  Autor}, \bibinfo{person}{James~E Bessen}, \bibinfo{person}{Erik
  Brynjolfsson}, \bibinfo{person}{Manuel Cebrian}, \bibinfo{person}{David~J
  Deming}, \bibinfo{person}{Maryann Feldman}, \bibinfo{person}{Matthew Groh},
  \bibinfo{person}{Jos{\'e} Lobo}, \bibinfo{person}{Esteban Moro},
  \bibinfo{person}{Dashun Wang}, \bibinfo{person}{Hyejin Youn}, {and}
  \bibinfo{person}{Iyad Rahwan}.} \bibinfo{year}{2019}\natexlab{}.
\newblock \showarticletitle{Toward understanding the impact of artificial
  intelligence on labor}.
\newblock \bibinfo{journal}{\emph{Proc. Natl. Acad. Sci. U. S. A.}}
  \bibinfo{volume}{116}, \bibinfo{number}{14} (\bibinfo{date}{April}
  \bibinfo{year}{2019}), \bibinfo{pages}{6531--6539}.
\newblock


\bibitem[Franzoni and Sauermann(2014)]%
        {Franzoni2014-fs}
\bibfield{author}{\bibinfo{person}{Chiara Franzoni} {and}
  \bibinfo{person}{Henry Sauermann}.} \bibinfo{year}{2014}\natexlab{}.
\newblock \showarticletitle{Crowd science: The organization of scientific
  research in open collaborative projects}.
\newblock \bibinfo{journal}{\emph{Res. Policy}} \bibinfo{volume}{43},
  \bibinfo{number}{1} (\bibinfo{date}{Feb.} \bibinfo{year}{2014}),
  \bibinfo{pages}{1--20}.
\newblock


\bibitem[Ghonim(2012)]%
        {Ghonim2012-qq}
\bibfield{author}{\bibinfo{person}{Wael Ghonim}.}
  \bibinfo{year}{2012}\natexlab{}.
\newblock \bibinfo{booktitle}{\emph{Revolution 2.0: The Power of the People is
  Greater Than the People in Power : a Memoir}}.
\newblock \bibinfo{publisher}{Houghton Mifflin Harcourt}.
\newblock


\bibitem[Gray and Suri(2019)]%
        {Gray2019-us}
\bibfield{author}{\bibinfo{person}{Mary~L Gray} {and}
  \bibinfo{person}{Siddharth Suri}.} \bibinfo{year}{2019}\natexlab{}.
\newblock \bibinfo{booktitle}{\emph{Ghost Work: How to Stop Silicon Valley from
  Building a New Global Underclass}}.
\newblock \bibinfo{publisher}{Houghton Mifflin Harcourt}.
\newblock


\bibitem[Groh et~al\mbox{.}(2022)]%
        {Groh2022-yr}
\bibfield{author}{\bibinfo{person}{Matthew Groh}, \bibinfo{person}{Ziv
  Epstein}, \bibinfo{person}{Chaz Firestone}, {and} \bibinfo{person}{Rosalind
  Picard}.} \bibinfo{year}{2022}\natexlab{}.
\newblock \showarticletitle{Deepfake detection by human crowds, machines, and
  machine-informed crowds}.
\newblock \bibinfo{journal}{\emph{Proc. Natl. Acad. Sci. U. S. A.}}
  \bibinfo{volume}{119}, \bibinfo{number}{1} (\bibinfo{date}{Jan.}
  \bibinfo{year}{2022}).
\newblock


\bibitem[Haas et~al\mbox{.}(2015)]%
        {Haas2015-oo}
\bibfield{author}{\bibinfo{person}{Daniel Haas}, \bibinfo{person}{Jason Ansel},
  \bibinfo{person}{Lydia Gu}, {and} \bibinfo{person}{Adam Marcus}.}
  \bibinfo{year}{2015}\natexlab{}.
\newblock \showarticletitle{Argonaut: Macrotask Crowdsourcing for Complex Data
  Processing}.
\newblock \bibinfo{journal}{\emph{Proceedings VLDB Endowment}}
  \bibinfo{volume}{8}, \bibinfo{number}{12} (\bibinfo{date}{Aug.}
  \bibinfo{year}{2015}), \bibinfo{pages}{1642--1653}.
\newblock


\bibitem[Heer(2019)]%
        {Heer2019-uf}
\bibfield{author}{\bibinfo{person}{Jeffrey Heer}.}
  \bibinfo{year}{2019}\natexlab{}.
\newblock \showarticletitle{Agency plus automation: Designing artificial
  intelligence into interactive systems}.
\newblock \bibinfo{journal}{\emph{Proc. Natl. Acad. Sci. U. S. A.}}
  \bibinfo{volume}{116}, \bibinfo{number}{6} (\bibinfo{date}{Feb.}
  \bibinfo{year}{2019}), \bibinfo{pages}{1844--1850}.
\newblock


\bibitem[Horton and Chilton(2010)]%
        {Horton2010-aq}
\bibfield{author}{\bibinfo{person}{John~Joseph Horton} {and}
  \bibinfo{person}{Lydia~B Chilton}.} \bibinfo{year}{2010}\natexlab{}.
\newblock \showarticletitle{The Labor Economics of Paid Crowdsourcing}. In
  \bibinfo{booktitle}{\emph{Proc 11th Conf on Electronic Commerce}} (Cambridge,
  Massachusetts, USA) \emph{(\bibinfo{series}{EC '10})}.
  \bibinfo{publisher}{ACM}, \bibinfo{address}{New York, NY},
  \bibinfo{pages}{209--218}.
\newblock


\bibitem[Hulikal~Muralidhar et~al\mbox{.}(2022)]%
        {Hulikal_Muralidhar2022-fy}
\bibfield{author}{\bibinfo{person}{Srihari Hulikal~Muralidhar},
  \bibinfo{person}{Sean Rintel}, {and} \bibinfo{person}{Siddharth Suri}.}
  \bibinfo{year}{2022}\natexlab{}.
\newblock \showarticletitle{Collaboration, invisible work, and the costs of
  macrotask freelancing}.
\newblock \bibinfo{journal}{\emph{Proc. ACM Hum. Comput. Interact.}}
  \bibinfo{volume}{6}, \bibinfo{number}{CSCW2} (\bibinfo{date}{Nov.}
  \bibinfo{year}{2022}), \bibinfo{pages}{1--25}.
\newblock


\bibitem[Khetan and Oh(2016)]%
        {Khetan2016-sh}
\bibfield{author}{\bibinfo{person}{Ashish Khetan} {and}
  \bibinfo{person}{Sewoong Oh}.} \bibinfo{year}{2016}\natexlab{}.
\newblock \showarticletitle{Achieving budget-optimality with adaptive schemes
  in crowdsourcing}. In \bibinfo{booktitle}{\emph{Adv in Neural Information
  Processing Systems}}. \bibinfo{publisher}{Curran Associates},
  \bibinfo{pages}{4844--4852}.
\newblock


\bibitem[Kim et~al\mbox{.}(2014)]%
        {Kim2014-dp}
\bibfield{author}{\bibinfo{person}{Joy Kim}, \bibinfo{person}{Justin Cheng},
  {and} \bibinfo{person}{Michael~S Bernstein}.}
  \bibinfo{year}{2014}\natexlab{}.
\newblock \showarticletitle{Ensemble: Exploring Complementary Strengths of
  Leaders and Crowds in Creative Collaboration}. In
  \bibinfo{booktitle}{\emph{Proceedings of the 17th {ACM} Conference on
  Computer Supported Cooperative Work \& Social Computing}} (Baltimore,
  Maryland, USA) \emph{(\bibinfo{series}{CSCW '14})}. \bibinfo{publisher}{ACM},
  \bibinfo{address}{New York, NY, USA}, \bibinfo{pages}{745--755}.
\newblock


\bibitem[Kittur et~al\mbox{.}(2013)]%
        {Kittur2013-qr}
\bibfield{author}{\bibinfo{person}{Aniket Kittur}, \bibinfo{person}{Jeffrey~V
  Nickerson}, \bibinfo{person}{Michael Bernstein}, \bibinfo{person}{Elizabeth
  Gerber}, \bibinfo{person}{Aaron Shaw}, \bibinfo{person}{John Zimmerman},
  \bibinfo{person}{Matt Lease}, {and} \bibinfo{person}{John Horton}.}
  \bibinfo{year}{2013}\natexlab{}.
\newblock \showarticletitle{The Future of Crowd Work}. In
  \bibinfo{booktitle}{\emph{Proc 2013 Conference on Computer Supported
  Cooperative Work}} (San Antonio, Texas, USA) \emph{(\bibinfo{series}{CSCW
  '13})}. \bibinfo{publisher}{ACM}, \bibinfo{address}{New York, NY},
  \bibinfo{pages}{1301--1318}.
\newblock


\bibitem[Kittur et~al\mbox{.}(2019)]%
        {Kittur2019-cc}
\bibfield{author}{\bibinfo{person}{Aniket Kittur}, \bibinfo{person}{Lixiu Yu},
  \bibinfo{person}{Tom Hope}, \bibinfo{person}{Joel Chan},
  \bibinfo{person}{Hila Lifshitz-Assaf}, \bibinfo{person}{Karni Gilon},
  \bibinfo{person}{Felicia Ng}, \bibinfo{person}{Robert~E Kraut}, {and}
  \bibinfo{person}{Dafna Shahaf}.} \bibinfo{year}{2019}\natexlab{}.
\newblock \showarticletitle{Scaling up analogical innovation with crowds and
  {AI}}.
\newblock \bibinfo{journal}{\emph{Proc. Natl. Acad. Sci. U. S. A.}}
  \bibinfo{volume}{116}, \bibinfo{number}{6} (\bibinfo{date}{Feb.}
  \bibinfo{year}{2019}), \bibinfo{pages}{1870--1877}.
\newblock


\bibitem[Kulkarni et~al\mbox{.}(2012)]%
        {Kulkarni2012-jw}
\bibfield{author}{\bibinfo{person}{Anand Kulkarni}, \bibinfo{person}{Matthew
  Can}, {and} \bibinfo{person}{Bj{\"o}rn Hartmann}.}
  \bibinfo{year}{2012}\natexlab{}.
\newblock \showarticletitle{Collaboratively Crowdsourcing Workflows with
  Turkomatic}. In \bibinfo{booktitle}{\emph{Proceedings of the {ACM} 2012
  Conference on Computer Supported Cooperative Work}} (Seattle, Washington,
  USA) \emph{(\bibinfo{series}{CSCW '12})}. \bibinfo{publisher}{ACM},
  \bibinfo{address}{New York, NY, USA}, \bibinfo{pages}{1003--1012}.
\newblock


\bibitem[Lave et~al\mbox{.}(1991)]%
        {Lave1991-ds}
\bibfield{author}{\bibinfo{person}{Jean Lave}, \bibinfo{person}{Etienne
  Wenger}, {and} \bibinfo{person}{Etienne Wenger}.}
  \bibinfo{year}{1991}\natexlab{}.
\newblock \bibinfo{booktitle}{\emph{Situated learning: Legitimate peripheral
  participation}}. Vol.~\bibinfo{volume}{521423740}.
\newblock \bibinfo{publisher}{Cambridge university press Cambridge}.
\newblock


\bibitem[Lee et~al\mbox{.}(2014)]%
        {Lee2014-re}
\bibfield{author}{\bibinfo{person}{David~Timothy Lee}, \bibinfo{person}{Ashish
  Goel}, \bibinfo{person}{Tanja Aitamurto}, {and} \bibinfo{person}{Helene
  Landemore}.} \bibinfo{year}{2014}\natexlab{}.
\newblock \showarticletitle{Crowdsourcing for participatory democracies:
  Efficient elicitation of social choice functions}. In
  \bibinfo{booktitle}{\emph{Second {AAAI} Conference on Human Computation and
  Crowdsourcing}}. \bibinfo{publisher}{aaai.org}.
\newblock


\bibitem[Little et~al\mbox{.}(2010)]%
        {Little2010-dx}
\bibfield{author}{\bibinfo{person}{Greg Little}, \bibinfo{person}{Lydia~B
  Chilton}, \bibinfo{person}{Max Goldman}, {and} \bibinfo{person}{Robert~C
  Miller}.} \bibinfo{year}{2010}\natexlab{}.
\newblock \showarticletitle{{TurKit}: Human Computation Algorithms on
  Mechanical Turk}. In \bibinfo{booktitle}{\emph{Proceedings of the 23Nd Annual
  {ACM} Symposium on User Interface Software and Technology}} (New York, New
  York, USA) \emph{(\bibinfo{series}{UIST '10})}. \bibinfo{publisher}{ACM},
  \bibinfo{address}{New York, NY, USA}, \bibinfo{pages}{57--66}.
\newblock


\bibitem[Lustig et~al\mbox{.}(2020)]%
        {Lustig2020-zv}
\bibfield{author}{\bibinfo{person}{C Lustig}, \bibinfo{person}{S Rintel},
  \bibinfo{person}{L Scult}, {and} \bibinfo{person}{S Suri}.}
  \bibinfo{year}{2020}\natexlab{}.
\newblock \showarticletitle{Stuck in the middle with you: The transaction costs
  of corporate employees hiring freelancers}.
\newblock \bibinfo{journal}{\emph{Proceedings of the ACM on Human}}
  (\bibinfo{year}{2020}).
\newblock


\bibitem[Maass(2004)]%
        {Maass2004-lo}
\bibfield{author}{\bibinfo{person}{W Maass}.} \bibinfo{year}{2004}\natexlab{}.
\newblock \showarticletitle{Inside an open source software community: empirical
  analysis on individual and group level}.
\newblock \bibinfo{journal}{\emph{Proceedings of the 4th Workshop on Open
  Source}} (\bibinfo{date}{Jan.} \bibinfo{year}{2004}),
  \bibinfo{pages}{65--70}.
\newblock


\bibitem[Mao et~al\mbox{.}(2013)]%
        {Mao2013-fz}
\bibfield{author}{\bibinfo{person}{Andrew Mao}, \bibinfo{person}{Ece Kamar},
  \bibinfo{person}{Yiling Chen}, \bibinfo{person}{Eric Horvitz},
  \bibinfo{person}{Megan~E Schwamb}, \bibinfo{person}{Chris~J Lintott}, {and}
  \bibinfo{person}{Arfon~M Smith}.} \bibinfo{year}{2013}\natexlab{}.
\newblock \showarticletitle{Volunteering Versus Work for Pay: Incentives and
  Tradeoffs in Crowdsourcing}. In \bibinfo{booktitle}{\emph{First {AAAI}
  Conference on Human Computation and Crowdsourcing}}.
  \bibinfo{publisher}{aaai.org}.
\newblock


\bibitem[Michelucci and Dickinson(2016)]%
        {Michelucci2016-wo}
\bibfield{author}{\bibinfo{person}{Pietro Michelucci} {and}
  \bibinfo{person}{Janis~L Dickinson}.} \bibinfo{year}{2016}\natexlab{}.
\newblock \showarticletitle{The power of crowds}.
\newblock \bibinfo{journal}{\emph{Science}} \bibinfo{volume}{351},
  \bibinfo{number}{6268} (\bibinfo{date}{Jan.} \bibinfo{year}{2016}),
  \bibinfo{pages}{32--33}.
\newblock


\bibitem[Miller(2016)]%
        {Miller2016-im}
\bibfield{author}{\bibinfo{person}{Rob Miller}.}
  \bibinfo{year}{2016}\natexlab{}.
\newblock \bibinfo{title}{The Problem of Context in Microtasks}.
\newblock
\newblock


\bibitem[Noveck(2015)]%
        {Noveck2015-lg}
\bibfield{author}{\bibinfo{person}{Beth~Simone Noveck}.}
  \bibinfo{year}{2015}\natexlab{}.
\newblock \bibinfo{booktitle}{\emph{Smart Citizens, Smarter State: The
  Technologies of Expertise and the Future of Governing}}.
\newblock \bibinfo{publisher}{Harvard University Press}.
\newblock


\bibitem[Okolloh(2009)]%
        {Okolloh2009-gb}
\bibfield{author}{\bibinfo{person}{Ory Okolloh}.}
  \bibinfo{year}{2009}\natexlab{}.
\newblock \showarticletitle{Ushahidi, or `testimony': Web 2.0 tools for
  crowdsourcing crisis information}.
\newblock \bibinfo{journal}{\emph{Participatory learning and action}}
  \bibinfo{volume}{59}, \bibinfo{number}{1} (\bibinfo{year}{2009}),
  \bibinfo{pages}{65--70}.
\newblock


\bibitem[Pennycook and Rand(2019)]%
        {Pennycook2019-li}
\bibfield{author}{\bibinfo{person}{Gordon Pennycook} {and}
  \bibinfo{person}{David~G Rand}.} \bibinfo{year}{2019}\natexlab{}.
\newblock \showarticletitle{Fighting misinformation on social media using
  crowdsourced judgments of news source quality}.
\newblock \bibinfo{journal}{\emph{Proc. Natl. Acad. Sci. U. S. A.}}
  \bibinfo{volume}{116}, \bibinfo{number}{7} (\bibinfo{date}{Feb.}
  \bibinfo{year}{2019}), \bibinfo{pages}{2521--2526}.
\newblock


\bibitem[Ponti and Seredko(2022)]%
        {Ponti2022-tt}
\bibfield{author}{\bibinfo{person}{Marisa Ponti} {and} \bibinfo{person}{Alena
  Seredko}.} \bibinfo{year}{2022}\natexlab{}.
\newblock \showarticletitle{Human-machine-learning integration and task
  allocation in citizen science}.
\newblock \bibinfo{journal}{\emph{Humanities and Social Sciences
  Communications}} \bibinfo{volume}{9}, \bibinfo{number}{1}
  (\bibinfo{date}{Feb.} \bibinfo{year}{2022}), \bibinfo{pages}{1--15}.
\newblock


\bibitem[Rivera and Lee(2021)]%
        {Rivera2021}
\bibfield{author}{\bibinfo{person}{Veronica Rivera} {and}
  \bibinfo{person}{David~T. Lee}.} \bibinfo{year}{2021}\natexlab{}.
\newblock \showarticletitle{{I Want to, but First I Need to: Understanding
  Crowdworkers' Career Goals, Challenges, and Tensions}}.
\newblock \bibinfo{journal}{\emph{Proceedings of the ACM on Human-Computer
  Interaction (PACM HCI)}} \bibinfo{number}{CSCW} (\bibinfo{year}{2021}).
\newblock


\bibitem[Sauermann and Franzoni(2015)]%
        {Sauermann2015-rz}
\bibfield{author}{\bibinfo{person}{Henry Sauermann} {and}
  \bibinfo{person}{Chiara Franzoni}.} \bibinfo{year}{2015}\natexlab{}.
\newblock \showarticletitle{Crowd science user contribution patterns and their
  implications}.
\newblock \bibinfo{journal}{\emph{Proc. Natl. Acad. Sci. U. S. A.}}
  \bibinfo{volume}{112}, \bibinfo{number}{3} (\bibinfo{date}{Jan.}
  \bibinfo{year}{2015}), \bibinfo{pages}{679--684}.
\newblock


\bibitem[Schmitz and Lykourentzou(2018)]%
        {Schmitz2018-te}
\bibfield{author}{\bibinfo{person}{Heinz Schmitz} {and} \bibinfo{person}{Ioanna
  Lykourentzou}.} \bibinfo{year}{2018}\natexlab{}.
\newblock \showarticletitle{Online Sequencing of {Non-Decomposable} Macrotasks
  in Expert Crowdsourcing}.
\newblock \bibinfo{journal}{\emph{Trans. Soc. Comput.}} \bibinfo{volume}{1},
  \bibinfo{number}{1} (\bibinfo{date}{Jan.} \bibinfo{year}{2018}),
  \bibinfo{pages}{1:1--1:33}.
\newblock


\bibitem[Sheng et~al\mbox{.}(2008)]%
        {Sheng2008-fa}
\bibfield{author}{\bibinfo{person}{Victor~S Sheng}, \bibinfo{person}{Foster
  Provost}, {and} \bibinfo{person}{Panagiotis~G Ipeirotis}.}
  \bibinfo{year}{2008}\natexlab{}.
\newblock \showarticletitle{Get Another Label? Improving Data Quality and Data
  Mining Using Multiple, Noisy Labelers}. In \bibinfo{booktitle}{\emph{Proc
  14th Conf on Knowledge Discovery and Data Mining}} (Las Vegas, Nevada, USA)
  \emph{(\bibinfo{series}{KDD '08})}. \bibinfo{publisher}{ACM},
  \bibinfo{address}{New York, NY}, \bibinfo{pages}{614--622}.
\newblock


\bibitem[Silberman et~al\mbox{.}(2010)]%
        {Silberman2010-tk}
\bibfield{author}{\bibinfo{person}{M~Six Silberman}, \bibinfo{person}{Lilly
  Irani}, {and} \bibinfo{person}{Joel Ross}.} \bibinfo{year}{2010}\natexlab{}.
\newblock \showarticletitle{Ethics and tactics of professional crowdwork}.
\newblock \bibinfo{journal}{\emph{XRDS}} \bibinfo{volume}{17},
  \bibinfo{number}{2} (\bibinfo{date}{Dec.} \bibinfo{year}{2010}),
  \bibinfo{pages}{39--43}.
\newblock


\bibitem[Simpson et~al\mbox{.}(2014)]%
        {Simpson2014-wx}
\bibfield{author}{\bibinfo{person}{Robert Simpson}, \bibinfo{person}{Kevin~R
  Page}, {and} \bibinfo{person}{David De~Roure}.}
  \bibinfo{year}{2014}\natexlab{}.
\newblock \showarticletitle{Zooniverse: Observing the World's Largest Citizen
  Science Platform}. In \bibinfo{booktitle}{\emph{Proc 23rd Conf on World Wide
  Web}} (Seoul, Korea) \emph{(\bibinfo{series}{WWW '14 Companion})}.
  \bibinfo{publisher}{ACM}, \bibinfo{address}{New York, NY},
  \bibinfo{pages}{1049--1054}.
\newblock


\bibitem[Steinmacher et~al\mbox{.}(2014)]%
        {Steinmacher2014-vq}
\bibfield{author}{\bibinfo{person}{Igor Steinmacher}, \bibinfo{person}{Marco
  Aur{\'e}lio~Graciotto Silva}, {and} \bibinfo{person}{Marco~Aur{\'e}lio
  Gerosa}.} \bibinfo{year}{2014}\natexlab{}.
\newblock \showarticletitle{Barriers Faced by Newcomers to Open Source
  Projects: A Systematic Review}. In \bibinfo{booktitle}{\emph{{OSS}}}.
  \bibinfo{pages}{153--163}.
\newblock


\bibitem[Straub et~al\mbox{.}(2023)]%
        {Straub2023-tf}
\bibfield{author}{\bibinfo{person}{Vincent~J Straub}, \bibinfo{person}{Milena
  Tsvetkova}, {and} \bibinfo{person}{Taha Yasseri}.}
  \bibinfo{year}{2023}\natexlab{}.
\newblock \showarticletitle{The cost of coordination can exceed the benefit of
  collaboration in performing complex tasks}.
\newblock \bibinfo{journal}{\emph{Collective Intelligence}}
  \bibinfo{volume}{2}, \bibinfo{number}{2} (\bibinfo{date}{April}
  \bibinfo{year}{2023}), \bibinfo{pages}{26339137231156912}.
\newblock


\bibitem[Trouille et~al\mbox{.}(2019)]%
        {Trouille2019-lc}
\bibfield{author}{\bibinfo{person}{Laura Trouille}, \bibinfo{person}{Chris~J
  Lintott}, {and} \bibinfo{person}{Lucy~F Fortson}.}
  \bibinfo{year}{2019}\natexlab{}.
\newblock \showarticletitle{Citizen science frontiers: Efficiency, engagement,
  and serendipitous discovery with human--machine systems}.
\newblock \bibinfo{journal}{\emph{Proceedings of the National Academy of
  Sciences}} \bibinfo{volume}{116}, \bibinfo{number}{6} (\bibinfo{year}{2019}),
  \bibinfo{pages}{1902--1909}.
\newblock


\bibitem[Valentine et~al\mbox{.}(2017)]%
        {Valentine2017-fs}
\bibfield{author}{\bibinfo{person}{Melissa~A Valentine},
  \bibinfo{person}{Daniela Retelny}, \bibinfo{person}{Alexandra To},
  \bibinfo{person}{Negar Rahmati}, \bibinfo{person}{Tulsee Doshi}, {and}
  \bibinfo{person}{Michael~S Bernstein}.} \bibinfo{year}{2017}\natexlab{}.
\newblock \showarticletitle{Flash Organizations: Crowdsourcing Complex Work by
  Structuring Crowds As Organizations}. In \bibinfo{booktitle}{\emph{Proc Conf
  on Human Factors in Computing Systems}} (Denver, Colorado, USA)
  \emph{(\bibinfo{series}{CHI '17})}. \bibinfo{publisher}{ACM},
  \bibinfo{address}{New York, NY}, \bibinfo{pages}{3523--3537}.
\newblock


\bibitem[Yang et~al\mbox{.}(2022)]%
        {Yang2022-bv}
\bibfield{author}{\bibinfo{person}{Longqi Yang}, \bibinfo{person}{David Holtz},
  \bibinfo{person}{Sonia Jaffe}, \bibinfo{person}{Siddharth Suri},
  \bibinfo{person}{Shilpi Sinha}, \bibinfo{person}{Jeffrey Weston},
  \bibinfo{person}{Connor Joyce}, \bibinfo{person}{Neha Shah},
  \bibinfo{person}{Kevin Sherman}, \bibinfo{person}{Brent Hecht}, {and}
  \bibinfo{person}{Jaime Teevan}.} \bibinfo{year}{2022}\natexlab{}.
\newblock \showarticletitle{The effects of remote work on collaboration among
  information workers}.
\newblock \bibinfo{journal}{\emph{Nat Hum Behav}} \bibinfo{volume}{6},
  \bibinfo{number}{1} (\bibinfo{date}{Jan.} \bibinfo{year}{2022}),
  \bibinfo{pages}{43--54}.
\newblock


\bibitem[Yi et~al\mbox{.}(2012)]%
        {Yi2012-hi}
\bibfield{author}{\bibinfo{person}{Jinfeng Yi}, \bibinfo{person}{Rong Jin},
  \bibinfo{person}{Shaili Jain}, \bibinfo{person}{Tianbao Yang}, {and}
  \bibinfo{person}{Anil~K Jain}.} \bibinfo{year}{2012}\natexlab{}.
\newblock \showarticletitle{{Semi-Crowdsourced} Clustering: Generalizing Crowd
  Labeling by Robust Distance Metric Learning}.
\newblock In \bibinfo{booktitle}{\emph{Advances in Neural Information
  Processing Systems 25}}, \bibfield{editor}{\bibinfo{person}{F~Pereira},
  \bibinfo{person}{C~J~C Burges}, \bibinfo{person}{L~Bottou}, {and}
  \bibinfo{person}{K~Q Weinberger}} (Eds.). \bibinfo{publisher}{Curran
  Associates, Inc.}, \bibinfo{pages}{1772--1780}.
\newblock


\bibitem[{Yunwen Ye} and Kishida(2003)]%
        {Yunwen_Ye2003-ir}
\bibfield{author}{\bibinfo{person}{{Yunwen Ye}} {and} \bibinfo{person}{K
  Kishida}.} \bibinfo{year}{2003}\natexlab{}.
\newblock \showarticletitle{Toward an understanding of the motivation of open
  source software developers}. In \bibinfo{booktitle}{\emph{25th International
  Conference on Software Engineering, 2003. Proceedings.}}
  \bibinfo{publisher}{ieeexplore.ieee.org}, \bibinfo{pages}{419--429}.
\newblock


\bibitem[Zhang et~al\mbox{.}(2011)]%
        {Zhang2011-cj}
\bibfield{author}{\bibinfo{person}{Haoqi Zhang}, \bibinfo{person}{Eric
  Horvitz}, \bibinfo{person}{Rob~C Miller}, {and} \bibinfo{person}{David~C
  Parkes}.} \bibinfo{year}{2011}\natexlab{}.
\newblock \showarticletitle{Crowdsourcing general computation}.
\newblock  (\bibinfo{year}{2011}).
\newblock


\end{thebibliography}

\newpage

% Appendix
\appendix
\section{Proofs for Scalability of Complex Collaborative Work}

\subsection{Proof of Theorem \ref{thm:infinite-workers-general}}

Theorem \ref{thm:infinite-workers-general} shows that for any non-zero interdependency, there is a fundamental limit to feasible task size even with infinite workers. To prove it, we start by deriving the maximum size that a node $v$ can be for $(\sigma_v, r_v)$ to be $v$-feasible. This is then applied to each node of a task graph to derive the overall work capacity.

\begin{lemma}\label{lem:v-feasible}
Consider a task graph $G = (V, E, s, d)$ and an assignment $\mathcal{A} = \{(\sigma_v, r_v)\}_{v \in V}$. Suppose that the individuals assigned to $v$ each make some positive (non-zero) contribution to the subtask, i.e. $w_v(k) > 0$.
%Let $S_v$ denote the maximum size subtask that can be completed for some assignment ordering of individuals $I_v$ to the subtask $v$. Let $\sigma_v, r_v$ denote the maximal ascending assignment. 
Then $\mathcal{A}$ is $v$-feasible if and only if,
\begin{align*}
    \sum_{k=1}^{\lvert \sigma_v\rvert} e_{\sigma_v(k)v}r_v(k)\left(1-d_v\right)^{\lvert\sigma_v\rvert-k} \geq s_v.%\label{eqn:subtask-feasibility}
\end{align*}
\end{lemma}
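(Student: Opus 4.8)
The plan is to track the cumulative work $s_v(k) = \sum_{i=1}^k w_v(i)$ completed after the first $k$ workers and to show that, under the positive-contribution hypothesis, it satisfies a linear recurrence whose closed-form solution is exactly the left-hand side of \eqref{eqn:subtask-feasibility} evaluated at $k=\lvert\sigma_v\rvert$.

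First I would use the hypothesis $w_v(k) > 0$ to resolve the case-splits in Lemma \ref{lem:contribution}. The context time $c_v(k)$ is the smaller of the context cost $s_v(k-1)d_v/e_{\sigma_v(k)v}$ and the allotment $r_v(k)$; since the $k$-th worker makes a strictly positive contribution, their context time cannot consume the entire allotment, so $c_v(k) = s_v(k-1)d_v/e_{\sigma_v(k)v}$. Likewise, for every worker except possibly the last, the subtask is not yet complete — otherwise the following worker would contribute nothing, contradicting the hypothesis — so the cap $s_v - s_v(k-1)$ is slack and $w_v(k) = e_{\sigma_v(k)v}\bigl(r_v(k) - c_v(k)\bigr) = e_{\sigma_v(k)v}r_v(k) - s_v(k-1)d_v$.

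Next I would substitute this into $s_v(k) = s_v(k-1) + w_v(k)$ to obtain the recurrence $s_v(k) = e_{\sigma_v(k)v}r_v(k) + (1-d_v)s_v(k-1)$ with $s_v(0)=0$, and unroll it by a one-line induction on $k$ to the closed form $s_v(k) = \sum_{j=1}^k e_{\sigma_v(j)v}r_v(j)(1-d_v)^{k-j}$. To close the equivalence, I would note that $v$-feasibility means the last worker completes $v$, i.e. their uncapped contribution $e_{\sigma_v(\lvert\sigma_v\rvert)v}r_v(\lvert\sigma_v\rvert) - s_v(\lvert\sigma_v\rvert-1)d_v$ is at least the remaining work $s_v - s_v(\lvert\sigma_v\rvert-1)$; rearranging this inequality and applying the recurrence at the final step shows it is equivalent to $\sum_{k=1}^{\lvert\sigma_v\rvert} e_{\sigma_v(k)v}r_v(k)(1-d_v)^{\lvert\sigma_v\rvert-k} \geq s_v$. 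Both directions of the iff then follow simultaneously.

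I expect the main obstacle to be the bookkeeping around the two caps rather than any deep algebra: one must argue carefully that the hypothesis $w_v(k)>0$ forces the context term (not $r_v(k)$) to determine $c_v(k)$ and forces the completion cap to stay slack for every non-final worker, so that the clean linear recurrence holds all the way up to the final step, where completion is actually decided. Once these case distinctions are pinned down, unrolling the recurrence and rearranging the final-step condition are routine.
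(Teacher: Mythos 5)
Your proposal is correct and follows essentially the same route as the paper's proof: resolve the case-splits in the contribution formulas using $w_v(k)>0$, derive the recurrence $s_v(k) = e_{\sigma_v(k)v}r_v(k) + (1-d_v)s_v(k-1)$, unroll it by induction to the closed form, and handle the completion cap only at the final worker. The one cosmetic difference is that you read $c_v(k)$ as the \emph{minimum} of the context cost and $r_v(k)$ (the sensible reading, and the one the paper's own appendix argument implicitly uses), whereas the statement of Lemma~\ref{lem:contribution} writes $\max$ --- an apparent typo that your proof correctly works around.
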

\begin{proof}
For $k < \lvert \sigma_v\rvert$, we give a proof by induction that $s_v(k)$, the amount of work completed by the first $k$ individuals, can be expressed as
\begin{align*}
    s_v(k) = \sum_{i=1}^k e_{\sigma_v(i)v}r_v(i)\left(1-d_v\right)^{k-i}.
\end{align*}
For the base case, we know that $s_v(0) = 0$. Then assuming the statement holds for $k-1$, we have,
\begin{align*}
    s_{v}(k) &= s_v(k-1) + w_v(k) \text{ (by definition of $s_v(k)$)}\\
    &= s_v(k-1) + e_{\sigma_v(k)v}(r_v(k)-c_v(k)) \text{ (by definition of $w_v(k)$, $k < \lvert \sigma_v\rvert$, $w_v(k+1) > 0$)}\\
    &= s_v(k-1) + e_{\sigma_v(k)v}\left[r_v(k) - \frac{d_vs_{v}(k-1)}{e_{\sigma_v(k)v}}\right]\text{ (by definition of $c_v(k)$, $w_v(k) > 0$)}\\
    &= \left(1-d_v\right)s_{v}(k-1) +     e_{\sigma_v(k)v}r_v(k)\text{ (rearranging terms)}\\
    &= \sum_{i=1}^k e_{\sigma_v(i)v}r_v(i)\left(1-d_v\right)^{k-i}\text{ (induction hypothesis)}
\end{align*}
This proves the expression for $s_v(k)$, $k < \lvert \sigma_v\rvert$. We can then use this to derive an expression for $s_v(k)$ when $k = \lvert \sigma_v \rvert$.
\begin{align*}
    s_{v}(k) &= s_v(k-1) + w_v(k) \text{ (by definition of $s_v(k)$)}\\
    &= s_v(k-1) + \min(e_{\sigma_v(k)v}(r_v(k)-c_v(k)), s_v - s_v(k-1)) \text{ (by definition of $w_v(k)$)}\\
    &= \min(s_v(k-1) + e_{\sigma_v(k)v}(r_v(k)-c_v(k)), s_v)\text{ (rearranging terms)}\\
    &= \min\left(\sum_{i=1}^k e_{\sigma_v(i)v}r_v(i)\left(1-d_v\right)^{k-i}, s_v\right) \text{ (using the same logic as for $k < \lvert \sigma_v \rvert$)}
\end{align*}
Therefore,
\begin{align*}
    s_v(\lvert \sigma_v\rvert) &= \min\left(\sum_{i=1}^{\lvert\sigma_v\rvert} e_{\sigma_v(i)v}r_v(i)\left(1-d_v\right)^{\lvert\sigma_v\rvert-i}, s_v\right)
\end{align*}
By definition, $\mathcal{A}$ is $v$-feasible if and only if $s_v(\lvert \sigma_v \rvert) \geq s_v$, which concludes our proof.
%We refer to (\ref{eqn:work-recursive}) as the recursive form and (\ref{eqn:work-expanded}) as the expanded form. We use $W_{I}(d) = \max_\sigma W_{I, \sigma}(d)$ to denote the maximum over all possible orderings.
%The maximum work that can by done by the first $n-1$ individuals is, by definition, $W_{I,\sigma_{n-1}}(d)$. Then the amount of time that the $n$-th individual needs to spend on prior context is $dW_{I,\sigma_{n-1}}(d)$, which means that the new progress they can make is $\max(0, e_{\sigma(n)}t_{\sigma(n)} - dW_{I,\sigma_{n-1}}(d))$.
\end{proof}

The informal intuition for proving Theorem \ref{thm:infinite-workers-general} is a straightforward application of Lemma \ref{lem:v-feasible}. Since we have an infinite number of homogeneous workers, the maximum feasible size of the overall task graph can be found by considering an infinite number of homogeneous workers for \textit{each node} in the task graph. (We don't need to worry about who is assigned to what node since workers are homogeneous). Lemma \ref{lem:v-feasible} gives us the maximum size for each node, which in turn gives us bounds for the overall task since each node is a fixed fraction of the overall task. We give the more formal (but less intuitive) $\epsilon\text{-}N$ proof below.

\begin{proof}
To prove the limit, it is sufficient to show that for any $\epsilon > 0$, there exists some $N(\epsilon)$ such that for all $n \geq N(\epsilon)$, 
\begin{align*}
    \frac{ts}{\max_{v \in V}\left[\frac{s_vd_{v}}{e_{v}} + \sum_{u \in N(v)}\frac{s_ud_{uv}}{e_{u}}\right]} \geq F_{I_{n}, h(G)} \geq \frac{ts}{\max_{v \in V}\left[\frac{s_vd_{v}}{e_{v}(1-\epsilon)} + \sum_{u \in N(v)}\frac{s_ud_{uv}}{e_{u}}\right]}
\end{align*}
Let $N(\epsilon) = \sum_v n_v(\epsilon)$, where $n_v(\epsilon) = \ln\frac{1}{\epsilon}/\ln\frac{1}{1-d_v}$. Consider $n \geq n(\epsilon)$ workers who are assigned only to subtask $v$ with all their time allocated to that subtask. Then from Lemma \ref{lem:v-feasible}, we know that this assignment is $v$-feasible if and only if
\begin{align*}
    s_v &\leq \sum_{i=1}^{n} e_{v}\left[t - \sum_{u \in N(v)}\frac{s_ud_{uv}}{e_{u}}\right]\left(1-d_v\right)^{n-i}\\
    &= e_{v}\left[t - \sum_{u \in N(v)}\frac{s_ud_{uv}}{e_{u}}\right]\frac{1-(1-d_v)^{n}}{d_v}
\end{align*}
Since $n \geq n_v(\epsilon)$, the assignment is $v$-feasible if (but not only if)
\begin{align*}
    s_v &\leq \frac{e_{v}(1-\epsilon)}{d_v}\left[t - \sum_{u \in N(v)}\frac{s_ud_{uv}}{e_{u}}\right],
\end{align*}
which is true if and only if
\begin{align*}
    t &\geq \frac{s_vd_v}{e_v(1-\epsilon)} + \sum_{u \in N(v)}\frac{s_ud_{uv}}{e_{u}} \iff s \leq \frac{ts}{\frac{s_vd_{v}}{e_{v}(1-\epsilon)} + \sum_{u \in N(v)}\frac{s_ud_{uv}}{e_{u}}}
\end{align*}
Since this holds for all $v \in V$, we get the right hand side of our desired equation
\begin{align*}
F_{I_{n}, h(G)} \geq \frac{ts}{\max_{v \in V}\left[\frac{s_vd_{v}}{e_{v}(1-\epsilon)} + \sum_{u \in N(v)}\frac{s_ud_{uv}}{e_{u}}\right]}
\end{align*}
To get the left-hand side, we note that 
\begin{align*}
    s_v &< e_{v}\left[t - \sum_{u \in N(v)}\frac{s_ud_{uv}}{e_{u}}\right]\frac{1}{d_v}.
\end{align*}
which means that the assignment is NOT feasible at $s_v = e_{v}\left[t - \sum_{u \in N(v)}\frac{s_ud_{uv}}{e_{u}}\right]\frac{1}{d_v}$. Following the same logic then gives us the left-hand side
\begin{align*}
F_{I_{n}, h(G)} \leq \frac{ts}{\max_{v \in V}\left[\frac{s_vd_{v}}{e_{v}} + \sum_{u \in N(v)}\frac{s_ud_{uv}}{e_{u}}\right]}
\end{align*}
Finally, we can simply plug in $d_{uv} = d$, $d_v = d$, and $e_v = e$ to get the simplified form.
\end{proof}

\subsection*{Proof for Theorem \ref{thm:sharp-threshold}}

This theorem follows fairly directly from the geometric nature of the work capacity expressed in Lemma \ref{lem:v-feasible}.

\begin{proof}
From Lemma \ref{lem:v-feasible}, we know that for each $v$, the assignment is $v$-feasible if $e_vt\sum_{k=1}^{n_v}(1-d_v)^{n_v - k} = e_vt\frac{1 - (1-d_v)^{n_v +1}}{d_v} \geq s_v$. We note that since $\ln(1+x) \geq x/(1+x)$ for $x > -1$, we have
\begin{align*}
    \ln\frac{1}{\epsilon}/\ln\frac{1}{1-d_v} = \ln\frac{1}{\epsilon}/\ln(1+\frac{d_v}{1-d_v}) \leq \frac{1}{d_v}\ln\frac{1}{\epsilon}
\end{align*}
We also note that when $n_v \geq \ln\frac{1}{\epsilon}/\ln\frac{1}{1-d_v}$, we have 
\begin{align*}
    (1-d_v)^{n_v} \leq (1-d_v)^{\ln\frac{1}{\epsilon}/\ln\frac{1}{1-d_v}} = (1-d_v)^{\ln\epsilon/\ln(1-d_v)} = e^{\ln(1-d_v)\ln\epsilon/\ln(1-d_v)} = \epsilon
\end{align*}
Putting these together, this means that when $n_v \geq \frac{1}{d_v}\ln\frac{1}{\epsilon}$, it must be true that $(1-d_v)^{n_v} \leq \epsilon$. If we also have $t \geq c_{v} + \frac{s_vd_v}{e_{v}}\cdot \frac{\epsilon}{1-\epsilon}$, then
\begin{align*}
    e_vt\frac{1 - (1-d_v)^{n_v +1}}{d_v} \geq e_vt\frac{1-\epsilon(1-d_v)}{d_v} \geq e_v\left(c_{v} + \frac{s_vd_v}{e_{v}}\cdot \frac{\epsilon}{1-\epsilon}\right)\frac{1-\epsilon(1-d_v)}{d_v}\\
    \geq e_v\left(\frac{s_vd_v}{e_v} + \frac{s_vd_v}{e_v}\cdot \frac{\epsilon}{1-\epsilon}\right)\frac{1-\epsilon(1-d_v)}{d_v} = e_v\left(\frac{s_vd_v}{e_v}\cdot \frac{1}{1-\epsilon}\right)\frac{1-\epsilon(1-d_v)}{d_v} \geq s_v
\end{align*}
Since this is true for all $v$, then the assignment must be feasible.
\end{proof}

\section*{Proofs for Recruiting and Situated Learning}

\subsection*{Proof for Theorem \ref{thm:simple-top-workers}}

The intuition of Theorem \ref{thm:simple-top-workers} is that one can simply pick the top $\frac{1}{d_v}\ln\frac{1}{\epsilon}$ workers for each node of an optimal assignment. Lemma \ref{lem:v-feasible} can then be used to show that the contributions from the remaining participants is negligible for every node.

\begin{theorem}\label{thm:simple-top-workers}
For a task graph $G = (V, E, s, d)$, let $d_H$ denote the harmonic mean of $d_v$. Then for workers $I$ and any $\epsilon > 0$, there exists workers $I(\epsilon) \subseteq I$ such that,
\begin{align*}
    \lvert I(\epsilon) \rvert \leq \frac{\lvert V\rvert}{d_H}\ln\frac{1}{\epsilon},\text{  and  }
    F_{I(\epsilon),h(G)} \geq (1-\epsilon)F_{I, h(G)}.%\label{eqn:top-workers}
\end{align*}
\end{theorem}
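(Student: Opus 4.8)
The plan is to prove the theorem by directly constructing $I(\epsilon)$ from an optimal assignment and then invoking Lemma \ref{lem:v-feasible} one subtask at a time. Fix a task graph $G^\star \in h(G)$ of total size $F_{I,h(G)} = \sum_{v} s_v^\star$ together with a feasible assignment $\mathcal{A}^\star = \{(\sigma_v, r_v)\}$. For each subtask $v$, let $n_v = \frac{1}{d_v}\ln\frac{1}{\epsilon}$ and let $I_v(\epsilon)$ be the $n_v$ workers of $\sigma_v$ with the largest expertise-weighted allocated time $w_{iv} := e_{iv}r_{iv}$ (take all of $\sigma_v$ if it is shorter than $n_v$), and set $I(\epsilon) = \bigcup_v I_v(\epsilon)$. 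Counting immediately gives $\lvert I(\epsilon)\rvert \le \sum_v n_v = \ln\frac{1}{\epsilon}\sum_v \frac{1}{d_v} = \frac{\lvert V\rvert}{d_H}\ln\frac{1}{\epsilon}$, where the last equality is just the definition of the harmonic mean $\sum_v 1/d_v = \lvert V\rvert/d_H$. It then remains to show that $I(\epsilon)$ can complete the member of $h(G)$ whose subtasks have the scaled sizes $(1-\epsilon)s_v^\star$; since $h(G)$ rescales all sizes by one common constant, this scaled task is indeed in $h(G)$ and has total size $(1-\epsilon)F_{I,h(G)}$, which is what the capacity bound requires.

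The core of the argument is a per-subtask estimate. Fix $v$ and sort the expertise-weighted times of the workers in $\sigma_v$ as $w_1 \ge w_2 \ge \cdots$. Lemma \ref{lem:v-feasible} gives that the feasibility sum $\sum_{k} e_{\sigma_v(k)v}r_v(k)(1-d_v)^{\lvert\sigma_v\rvert-k}$ is at least $s_v^\star$; by the rearrangement inequality this sum is no larger than its sorted counterpart $\sum_{j} w_j(1-d_v)^{j-1}$ (largest weights placed in the latest, least-discounted positions), so $\sum_{j} w_j(1-d_v)^{j-1} \ge s_v^\star$. Keeping only the top $n_v$ workers and re-assigning them their original times in this optimal order, their feasibility sum is exactly the prefix $\sum_{j=1}^{n_v} w_j(1-d_v)^{j-1}$, so the whole theorem reduces to bounding the discarded tail $\sum_{j>n_v} w_j(1-d_v)^{j-1}$ by $\epsilon\sum_{j} w_j(1-d_v)^{j-1}$.

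The tail bound is the step I expect to be the main obstacle, and it hinges on monotonicity of the sorted weights. Writing $j = n_v + m$ and using $w_{n_v+m} \le w_m$, I would factor out the common discount to obtain $\sum_{m\ge1} w_{n_v+m}(1-d_v)^{n_v+m-1} \le (1-d_v)^{n_v}\sum_{m\ge1} w_m(1-d_v)^{m-1}$, i.e. the tail is at most $(1-d_v)^{n_v}$ times the full sorted sum. With $n_v = \frac{1}{d_v}\ln\frac{1}{\epsilon}$ the same elementary estimate used in Theorem \ref{thm:sharp-threshold} — namely $-\ln(1-d_v)\ge d_v$, so $\frac{1}{d_v}\ln\frac{1}{\epsilon} \ge \ln\frac{1}{\epsilon}/\ln\frac{1}{1-d_v}$ — yields $(1-d_v)^{n_v} \le \epsilon$. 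Hence the retained prefix is at least $(1-\epsilon)$ times the full sorted sum, which is itself at least $s_v^\star$, so the top workers complete the scaled subtask of size $(1-\epsilon)s_v^\star$.

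Finally, I would verify global time feasibility, which is where the multi-subtask bookkeeping lives. Scaling every subtask down by $(1-\epsilon)$ only shrinks the prerequisite context costs $\frac{s_u d_{uv}}{e_{iu}}$, so each retained worker's time budget on $v$ can only relax; keeping their original allocations $r_{iv}$ and dropping them from any subtask where they are not among the top $n_v$ can only lower their total load below $t_i$, so the combined assignment stays feasible. A minor technical caveat, deferred to the Appendix, is that Lemma \ref{lem:v-feasible} presumes every assigned worker makes a positive contribution; one trims any redundant allocation in the prefix (which only decreases the worker count) so that each kept worker contributes positively and the completed size is exactly $(1-\epsilon)s_v^\star$.
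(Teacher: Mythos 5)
Your proposal is correct and follows essentially the same route as the paper's proof: pick the top $\frac{1}{d_v}\ln\frac{1}{\epsilon}$ workers per subtask from an optimal assignment, place them in the least-discounted positions of the geometric sum from Lemma \ref{lem:v-feasible}, and bound the discarded tail by $(1-d_v)^{n_v}\leq\epsilon$ times the full sum. The only cosmetic differences are that you invoke the rearrangement inequality explicitly where the paper argues the optimal assignment can be assumed sorted, and that you spell out the uniform $(1-\epsilon)$ rescaling and the global time-feasibility check, which the paper leaves implicit.
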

\begin{proof}
Let $\mathcal{A} = \{(\sigma_v, r_v)\}_{v \in V}$ be an optimal assignment that achieves the work capacity $F_{I, h(G)}$. We assume that this assignment orders assigned individuals in an ascending manner by expertise-weighted time $e_{iv}\left[t_i - \sum_{u \in N(v)}\frac{s_ud_{uv}}{e_{iu}}\right]$, for any node $v \in V$. We can assume this because given any optimal assignment, a rearrangement of individuals in this way can only increase the capacity (as can be seen by the expression in Lemma \ref{lem:v-feasible}) and so must also be optimal. 

Let $I_v = \{ i \in I : \sigma_v(k) = i \text{ for some $k$ }\}$ denote the workers assigned to $v$. For each $v \in V$, let $I_v(\epsilon)$ denote the top $\frac{1}{d_v}\ln\frac{1}{\epsilon}$ workers in $I_v$ for which $e_{iv}\left[t_i - \sum_{u \in N(v)}\frac{s_ud_{uv}}{e_{iu}}\right]$ is maximized, and let $I(\epsilon) = \cup_{v \in V}I_v(\epsilon)$. Then,
\begin{align*}
    \lvert I(\epsilon) \rvert &= \sum_{v \in V}\frac{1}{d_v}\ln\frac{1}{\epsilon} = \frac{\lvert V\rvert}{d_H}\ln\frac{1}{\epsilon}.
\end{align*}
By Lemma \ref{lem:v-feasible}, we know that the maximum capacity of individuals $I_v$ assigned to a single node $v$ is
\begin{align*}
    s_v^* = \sum_{i=1}^{\lvert \sigma_v \rvert} e_{\sigma_v(i)v}r_v(i)\left(1-d_v\right)^{\lvert\sigma_v\rvert-i}
\end{align*}
Let $n_{v1} = \frac{1}{d_v}\ln\frac{1}{\epsilon}$ represent the number of top workers and $n_{v2} = \lvert \sigma_v \rvert - n_{v1}$ be the number of remaining workers. Since individuals are assigned in ascending order and $I_v(\epsilon)$ contains the top workers, we know that the maximum capacity of individuals $I_v(\epsilon)$ assigned to a single node $v$ is
\begin{align*}
    s_v^{top} = \sum_{i=n_{v2}+1}^{\lvert\sigma_v\rvert} e_{\sigma_v(i)v}r_v(i)\left(1-d_v\right)^{\lvert\sigma_v\rvert-i} = s_v^* - \sum_{i=1}^{n_{v2}} e_{\sigma_v(i)v}r_v(i)\left(1-d_v\right)^{\lvert\sigma_v\rvert-i},
\end{align*}
We now show that $s_v^{top} \geq (1-\epsilon)s_v^*$. To show $s_v^{top} \geq (1-\epsilon)s_v^*$, we can show that $s_v^* - s_v^{top} \leq \epsilon s_v^*$, as follows,
\begin{align*}
    s_v^* - s_v^{top} &= \sum_{i=1}^{n_{v2}} e_{\sigma_v(i)v}r_v(i)\left(1-d_v\right)^{\lvert\sigma_v\rvert-i}\\
    &= (1-d_v)^{n_{v1}}\sum_{i=1}^{n_{v2}} e_{\sigma_v(i)v}r_v(i)\left(1-d_v\right)^{\lvert\sigma_v\rvert-i-n_{v1}}\text{ (bringing out common factors)}\\
    &\leq \epsilon\sum_{i=1}^{n_{v2}} e_{\sigma_v(i)v}r_v(i)\left(1-d_v\right)^{\lvert\sigma_v\rvert-i-n_{v1}}\text{ (by $(1-d_v)^{n_{v1}}\leq \epsilon$)}\\
    &\leq \epsilon\sum_{i=n_{v1} +1}^{\lvert\sigma_v \rvert} e_{\sigma_v(i-n_{v1})v}r_v(i-n_{v1})\left(1-d_v\right)^{\lvert \sigma_v\rvert -i}\text{ (by change of variables)}\\
    &\leq \epsilon\sum_{i=n_{v1} +1}^{\lvert\sigma_v \rvert} e_{\sigma_v(i)v}r_v(i)\left(1-d_v\right)^{\lvert \sigma_v\rvert -i}\text{ (by ascending order of $e_{\sigma_v(i)v}r_v(i)$)}\\
    &\leq \epsilon s_v^*.
\end{align*}
Since this holds for all nodes, this gives us our desired result. 
\end{proof}

\subsection*{Proof for Theorem \ref{thm:lpp-accessible}}

The intuition is that if $v$ is to be LPP-accessible, then there must be a sufficient amount of the context costs that it can reduce through upskilling in other tasks first through the LPP process.

\begin{theorem}\label{thm:lpp-accessible}
Consider a task ecosystem with shared task graph $(V, E, s, d)$. For subtask type $v$, let $T_v$ denote the time cost of completing subtask type $v$ given fully upskilled workers, i.e. $e_{iv} = e_{iu} = e_{\text{normal}}$ for all $u \in N(v)$. Let $P \subseteq N(v)$ denote the prerequisite subtask types of $v$ that are LPP-accessible, and let $\alpha_v$ denote the fraction of $T_v$ that fully upskilled workers would need to spend on context costs for subtask types in $P$. Then $v$ is trivially accessible if and only if $T_v \leq \frac{\tau}{M}$ and $v$ is \textit{LPP-accessible} if and only if,
% let $C_v = \sum_{u \in N(v)} \frac{s_ud_{uv}}{e_{\text{normal}}}$ denote the context cost, let $T_v = D_v + C_v$ denote the total cost, let $\Delta_v = T_v - \tau$ denote the amount at which the total cost exceeds the required time, and let $\alpha_v = \frac{C_v}{T_v}$ denote the context cost ratio, i.e. the fraction of the total cost consisting of context costs.
    \begin{align*}
        %T_v &\leq \tau,\\
        \alpha_v &\geq \frac{MT_v - \tau}{MT_v - T_v}
    \end{align*}
\end{theorem}
\begin{proof}
We know by definition that $T_v$ is defined as the time that fully upskilled workers would take to finish $v$, i.e. $T_v = \frac{s_v}{e_{\text{normal}}} + \sum_{u \in N(v)} \frac{s_ud_{uv}}{e_{\text{normal}}}$. We also know that $v$ is trivially accessible if and only if a novice can be directly assigned to complete it within the required time, i.e. $\frac{s_v}{e_{\text{novice}}} + \sum_{u \in N(v)} \frac{s_ud_{uv}}{e_{\text{novice}}} \leq \tau$. Then the fact that $e_{\text{novice}} = e_{\text{normal}}/M$ directly gives us our result that $v$ is trivially accessible if and only if $T_v \leq \frac{\tau}{M}$.

We know by definition that $\alpha_v$ is the fraction of $T_v$ that fully upskilled workers need to spend on context costs for subtask types in $P$, i.e. $\alpha_v = \frac{\sum_{u \in N(v)\cap P} \frac{s_ud_{uv}}{e_{\text{normal}}}}{T_v}$. We also know that by definition, $v$ is LPP-accessible if and only if some set of LPP-accessible tasks $Q$ exists such that
\begin{align*}
    \frac{s_v}{e_{\text{novice}}} + \sum_{u \in N(v) \setminus Q} \frac{s_ud_{uv}}{e_{\text{novice}}} + \sum_{u \in N(v)\cap Q} \frac{s_ud_{uv}}{e_{\text{normal}}}\leq \tau
\end{align*}
Let us first prove that if $\alpha_v \geq \frac{MT_v - \tau}{MT_v - T_v}$, then $v$ is LPP accessible. Choose $Q = P$. Then we have
\begin{align*}
    \frac{s_v}{e_{\text{novice}}} + \sum_{u \in N(v) \setminus P} \frac{s_ud_{uv}}{e_{\text{novice}}} + \sum_{u \in N(v)\cap P} \frac{s_ud_{uv}}{e_{\text{normal}}} = MT_v - (M-1)\sum_{u \in N(v)\cap P} \frac{s_ud_{uv}}{e_{\text{normal}}}\\
    = MT_v - (M-1)\alpha_vT_v \leq MT_v - (M-1)\left(\frac{MT_v - \tau}{MT_v - T_v}\right)T_v = \tau
\end{align*}
which gives us our result. Now let us prove that if $\alpha_v < \frac{MT_v - \tau}{MT_v - T_v}$, then $v$ is not LPP accessible. We will prove this by contradiction. Suppose that $\alpha_v < \frac{MT_v - \tau}{MT_v - T_v}$ but $v$ is LPP accessible. 
Then this means there must exist some $Q$ that satisfies $\frac{s_v}{e_{\text{novice}}} + \sum_{u \in N(v) \setminus Q} \frac{s_ud_{uv}}{e_{\text{novice}}} + \sum_{u \in N(v)\cap Q} \frac{s_ud_{uv}}{e_{\text{normal}}}\leq \tau$. Since $P$ is defined as the set of \textit{all} prerequisite subtask types of $v$ that are LPP accessible, it must be true that $Q \subseteq P$. Let $R = P \setminus Q$. Then we have,
\begin{align*}
    \frac{s_v}{e_{\text{novice}}} + \sum_{u \in N(v) \setminus Q} \frac{s_ud_{uv}}{e_{\text{novice}}} + \sum_{u \in N(v)\cap Q} \frac{s_ud_{uv}}{e_{\text{normal}}} &= 
    \frac{s_v}{e_{\text{novice}}} + \sum_{u \in N(v) \setminus P} \frac{s_ud_{uv}}{e_{\text{novice}}} + \sum_{u \in N(v)\cap P} \frac{s_ud_{uv}}{e_{\text{normal}}} + (M-1)\sum_{u \in N(v)\cap R} \frac{s_ud_{uv}}{e_{\text{normal}}}\\
    &= MT_v - (M-1)\alpha_vT_v + (M-1)\sum_{u \in N(v)\cap R} \frac{s_ud_{uv}}{e_{\text{normal}}}\\
    &> MT_v - (M-1)\left(\frac{MT_v - \tau}{MT_v - T_v}\right)T_v + (M-1)\sum_{u \in N(v)\cap R} \frac{s_ud_{uv}}{e_{\text{normal}}}\\
    &= \tau + (M-1)\sum_{u \in N(v)\cap R} \frac{s_ud_{uv}}{e_{\text{normal}}}\\
    &\geq \tau
\end{align*}
But this is a contradiction since we showed that the left-hand side is both $\geq \tau$ and $< \tau$, so such a $Q$ must not exist.
\end{proof}

\section*{Proofs for Coordination Intensity and Wages}

\subsection*{Proof for Theorem \ref{thm:coord-wages}}

\begin{lemma}\label{lem:expertise-hm}
Consider occupational recruitment with parameters $(\mathcal{O}, \tau)$. Then letting $HM$ denote the harmonic mean,
\begin{align*}
    \rho_O = \frac{1}{2}\frac{s_O}{s_{O^*}}\text{HM}\left(\rho_{O^*}, \frac{s_{O^*}}{C^* - C_O}\right),%\label{eqn:coord}
\end{align*}
%where HM denotes the harmonic mean.
\end{lemma}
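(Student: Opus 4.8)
The plan is to prove this identity by direct substitution, exploiting the fact that the second argument of the harmonic mean has been chosen precisely so that the reciprocals telescope. First I would expand every definition. By the definition of $\rho_O$ we have $\rho_O = \frac{s_O}{\tau - C_O}$, and since $O^*$ is the occupation attaining the maximum $C^* = \max_{O \in \mathcal{O}} C_O$ (so $C_{O^*} = C^*$), we have $\rho_{O^*} = \frac{s_{O^*}}{\tau - C^*}$. The only nontrivial object on the right-hand side is the harmonic mean, so I would rewrite it through the sum of reciprocals: writing $a = \rho_{O^*}$ and $b = \frac{s_{O^*}}{C^* - C_O}$, we have $\text{HM}(a,b) = \frac{2}{1/a + 1/b}$.

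The key step—and essentially the entire content of the lemma—is the cancellation that occurs in $1/a + 1/b$. Computing the reciprocals gives $\frac{1}{a} = \frac{\tau - C^*}{s_{O^*}}$ and $\frac{1}{b} = \frac{C^* - C_O}{s_{O^*}}$, so that
\begin{align*}
    \frac{1}{a} + \frac{1}{b} = \frac{(\tau - C^*) + (C^* - C_O)}{s_{O^*}} = \frac{\tau - C_O}{s_{O^*}},
\end{align*}
where the $C^*$ terms cancel. Hence $\text{HM}(a,b) = \frac{2 s_{O^*}}{\tau - C_O}$. Substituting this back, the prefactor $\frac{1}{2}\frac{s_O}{s_{O^*}}$ cancels the factor of $2$ and the $s_{O^*}$, leaving exactly $\frac{s_O}{\tau - C_O} = \rho_O$, which is the claim.

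I do not expect any genuine obstacle here: once the harmonic mean is expressed via reciprocals, the proof is a one-line algebraic simplification, and the form of $b = s_{O^*}/(C^* - C_O)$ is engineered so that adding $1/b$ to $1/\rho_{O^*}$ restores the denominator $\tau - C_O$. The only point worth flagging is a mild well-definedness condition: the expression requires $C^* > C_O$ so that $b$ is finite and positive, which holds for every non-maximal occupation and in particular in the regime $\alpha = C_O/C^* \leq \tfrac{1}{2}$ used by Theorem \ref{thm:coord-wages}. The degenerate case $O = O^*$ (where $b \to \infty$) can be checked separately or as a limit, since $\text{HM}(\rho_{O^*}, \infty) = 2\rho_{O^*}$ recovers $\rho_{O^*}$ on the right-hand side, matching $\rho_O = \rho_{O^*}$.
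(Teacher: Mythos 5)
Your proof is correct and is essentially the same argument as the paper's: both reduce to the observation that $\frac{1}{\rho_{O^*}} + \frac{C^* - C_O}{s_{O^*}} = \frac{\tau - C_O}{s_{O^*}}$, the paper by substituting $\tau = \frac{s_{O^*}}{\rho_{O^*}} + C^*$ into $\rho_O(\tau - C_O) = s_O$ and you by expanding the harmonic mean into reciprocals, which is the same algebra run in the opposite direction. Your remark on the degenerate case $O = O^*$ is a reasonable extra caution not present in the paper.
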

\begin{proof}
    We know that $\rho_{O}(\tau-C_O) = s_O$ and that $\rho_{O^*}(\tau-C^*) = s_{O^*}$. This means that $\tau = \frac{s_{O^*}}{\rho_{O^*}} + C^*$. Plugging this into the former equation and solving for $\rho_{O}$, we get 
    $\rho_{O} = \frac{s_O}{\frac{s_{O^*}}{\rho_{O^*}} + (C^* - C_O)} = \frac{1}{2}\frac{s_O}{s_{O^*}}\text{HM}\left(\rho_{O^*}, \frac{s_{O^*}}{C^* - C_O}\right)$.
\end{proof}

\begin{theorem}\label{thm:coord-wages}
    Consider occupational recruitment with parameters $(\mathcal{O}, \tau)$ and let $J = \max_{O \in \mathcal{O}}\frac{s_{O^*}}{s_O}$ denote the multiplicative gap between the largest and smallest occupational sizes. Then if $\alpha = \frac{C_O}{C^*} \leq \frac{1}{2}$ and $C^* \geq \frac{2}{3}\tau$, we have%\footnote{Similar results can be obtained for other constants.},
    \begin{align*}
         \alpha + \ln\left(\frac{1}{2J}\frac{s_{O^*}}{C^*}\right) \leq \ln(\rho_O) \leq 2\alpha + \ln\left(\frac{s_{O^*}}{C^*}\right).\label{eqn:coord-approx}
    \end{align*}
\end{theorem}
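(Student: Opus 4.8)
The plan is to start from Lemma \ref{lem:expertise-hm}, which already rewrites $\rho_O$ as $\frac{1}{2}\frac{s_O}{s_{O^*}}$ times the harmonic mean of $\rho_{O^*}$ and $\frac{s_{O^*}}{C^* - C_O}$, and to convert the standard two-sided bound on a harmonic mean into the claimed two-sided bound on $\ln(\rho_O)$. The whole argument is a chain of elementary estimates, so the real content is arranging for the hypotheses $\alpha \leq \frac{1}{2}$ and $C^* \geq \frac{2}{3}\tau$ to do exactly the work needed at two specific points.

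First I would identify which of the two arguments of the harmonic mean is the smaller one. Writing $\rho_{O^*} = \frac{s_{O^*}}{\tau - C^*}$, the inequality $\frac{s_{O^*}}{C^* - C_O} \leq \rho_{O^*}$ is equivalent to $C^* - C_O \geq \tau - C^*$, i.e. $2C^* - C_O \geq \tau$. The two hypotheses give this immediately: $C_O = \alpha C^* \leq \frac{1}{2}C^*$ and $C^* \geq \frac{2}{3}\tau$ combine to yield $2C^* - C_O \geq \frac{3}{2}C^* \geq \tau$. Hence the minimizing argument is $\frac{s_{O^*}}{C^* - C_O}$, and applying $\min(a,b) \leq \mathrm{HM}(a,b) \leq 2\min(a,b)$ to Lemma \ref{lem:expertise-hm} cancels the $s_{O^*}$ factors and gives the clean sandwich
\begin{align*}
\frac{1}{2}\frac{s_O}{C^* - C_O} \leq \rho_O \leq \frac{s_O}{C^* - C_O}.
\end{align*}

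Next I would substitute $C^* - C_O = C^*(1 - \alpha)$, take logarithms, and control the $-\ln(1-\alpha)$ term. Writing $\frac{1}{1-\alpha} = 1 + \frac{\alpha}{1-\alpha}$ and applying $\frac{x}{1+x} \leq \ln(1+x) \leq x$ with $x = \frac{\alpha}{1-\alpha}$ yields $\alpha \leq -\ln(1-\alpha) \leq \frac{\alpha}{1-\alpha}$, and the upper side is at most $2\alpha$ precisely because $\alpha \leq \frac{1}{2}$. These produce the $\alpha$ (lower) and $2\alpha$ (upper) coefficients in the statement. The remaining bookkeeping replaces $s_O$ by $s_{O^*}$: for the upper bound I use $s_O \leq s_{O^*}$ (so $\ln\frac{s_O}{C^*} \leq \ln\frac{s_{O^*}}{C^*}$), and for the lower bound I use $s_O \geq \frac{s_{O^*}}{J}$ from the definition of $J$, so that the factor $\frac{1}{2}$ from the harmonic-mean lower bound and the factor $\frac{1}{J}$ merge into the $\frac{1}{2J}$ inside the logarithm.

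I expect the main obstacle to be the first step: correctly pinning down that $\frac{s_{O^*}}{C^* - C_O}$, and not $\rho_{O^*}$, is the minimizing argument, since both the harmonic-mean sandwich and the eventual cancellation of $s_{O^*}$ depend on it, and this is the unique place where both hypotheses are used simultaneously and tightly (equality in $2C^* - C_O \geq \tau$ occurs exactly at $\alpha = \frac{1}{2}$, $C^* = \frac{2}{3}\tau$). Everything afterward is routine, reusing the same $\frac{x}{1+x} \leq \ln(1+x) \leq x$ bound that appears elsewhere in the paper. A minor point to keep straight is that the argument implicitly treats $O^*$ as the largest occupation, consistent with $J = \max_{O}\frac{s_{O^*}}{s_O}$ being a largest-to-smallest ratio, which is what licenses $s_O \leq s_{O^*}$ in the upper bound.
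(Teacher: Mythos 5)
Your proposal is correct and follows essentially the same route as the paper's proof: Lemma~\ref{lem:expertise-hm}, using both hypotheses to establish that $\frac{s_{O^*}}{C^*-C_O}$ is the smaller argument of the harmonic mean, the sandwich $\min(a,b)\leq \mathrm{HM}(a,b)\leq 2\min(a,b)$, and the bound $\frac{x}{1+x}\leq\ln(1+x)\leq x$ applied to $x=\frac{\alpha}{1-\alpha}$. Your writeup is in fact slightly cleaner in one spot, since the paper writes a couple of these logarithm estimates as equalities where they are really inequalities.
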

\begin{proof}
    First we note that if $\alpha = \frac{C_O}{C^*} \leq \frac{1}{2}$ and $C^* \geq \frac{2}{3}\tau$, then we have that
    \begin{align*}
        \frac{s_{O*}}{C^* - C_O} \leq \frac{s_{O*}}{\frac{1}{2}C^*} \leq \frac{s_{O*}}{\frac{1}{3}\tau} \leq \frac{s_{O*}}{\tau - C^*} = \rho_{O^*}
    \end{align*}
    This means that we can now use Lemma \ref{lem:expertise-hm} and the fact that the harmonic mean satisfies $\min(a,b) \leq HM(a,b) \leq 2\min(a,b)$ to get the following
    \begin{align*}
        \frac{1}{2}\frac{s_{O}}{C^* - C_O} \leq \rho_O \leq \frac{s_{O}}{C^* - C_O}
    \end{align*}
    Then with careful approximations using the fact that $\frac{x}{1+x} \leq \ln(1+x) \leq x$ for $x > 0$, we have
    \begin{align*}
        \ln(\rho_O) &\leq \ln\left(\frac{s_{O}}{C^* - C_O}\right)\\
        &= \ln\left(\frac{s_{O}}{C^*}\right) + \ln\left(\frac{1}{1 - \frac{C_O}{C^*}}\right)
        = \ln\left(\frac{s_{O}}{C^*}\right) + \ln\left(1 + \frac{\frac{C_O}{C^*}}{1 - \frac{C_O}{C^*}}\right)\\
        &= \ln\left(\frac{s_{O}}{C^*}\right) + \frac{\frac{C_O}{C^*}}{1 - \frac{C_O}{C^*}}
        = \ln\left(\frac{s_{O}}{C^*}\right) + 2\alpha \leq \ln\left(\frac{s_{O^*}}{C^*}\right) + 2\alpha
    \end{align*}
    Similarly, on the other side, we have
    \begin{align*}
        \ln(\rho_O) &\geq \ln\left(\frac{1}{2}\frac{s_{O}}{C^* - C_O}\right)\\
        &= \ln\left(\frac{1}{2}\frac{s_{O}}{C^*}\right) + \ln\left(\frac{1}{1 - \frac{C_O}{C^*}}\right)
        = \ln\left(\frac{1}{2}\frac{s_{O}}{C^*}\right) + \ln\left(1 + \frac{\frac{C_O}{C^*}}{1 - \frac{C_O}{C^*}}\right)\\
        &= \ln\left(\frac{1}{2}\frac{s_{O}}{C^*}\right) + \frac{\frac{C_O}{C^*}/(1 - \frac{C_O}{C^*})}{1+\frac{C_O}{C^*}/(1 - \frac{C_O}{C^*})}
        = \ln\left(\frac{1}{2}\frac{s_{O}}{C^*}\right) + \alpha \geq \ln\left(\frac{1}{2J}\frac{s_{O^*}}{C^*}\right) + \alpha 
    \end{align*}
\end{proof}

\section*{Additional Tables for Empirical Analyses of Coordination Intensity}

Table \ref{tab:occupation_coordinate} begins by replicating the main result that occupations with higher coordination intensity have higher hourly wages and lower employment. We control for demographic characteristics and skill intensities to mitigate concerns about omitted variables bias. In particular, we use the American Community Survey (ACS) five-year sample from 2021 to semi-parametrically control for the demographic distribution within every five-digit SOC code, including: the number of children, age, gender, race, education, and degree of rural versus urban residence. For example, occupations with more educated workers could have higher wages for reasons unrelated to coordination intensity. We also control for skill intensities, including: reading, active listening, critical thinking, problem solving, judgment, systems analysis, and operations analysis skills. These controls help ensure that we are not conflating coordination intensity with occupations that simply require other sorts of skills. Although the statistical significance declines as we add these controls from column 1 to 3, the economic significance and negative relationship remains. The positive correlation with hourly wages, however, remains both statistically and economically significant (columns 4-6).

\begin{table}[htbp]\centering
\def\sym#1{\ifmmode^{#1}\else\(^{#1}\)\fi} \caption{Controlling for Observables in Employment and Wage Regressions \label{tab:occupation_coordinate}}
\resizebox{11cm}{!}{
\begin{tabular}{l*{6}{c}} 
\hline\hline
\small
\expandableinput{tables/occupation_coordinate.tex}
\hline\hline 
\end{tabular}}
\fnote{Notes.---Sources: American Community Survey (5-year 2021) and O*NET. The table reports the coefficients associated with regressions of the log number of employees and the log median hourly wage (deflated using the 2012 personal consumption expenditure index) on the standardized z-score of coordination intensity at the occupational level, controlling for various demographic characteristics and skill intensities. We gather the demographic characteristics from the ACS as five-digit SOC occupational averages. We also gather various skill intensities to control for additional dimensions of unobserved heterogeneity. Standard errors are clustered at the occupational level and observations are weighted by employment.}
\end{table}

Table \ref{tab:industry_coordinate} duplicates these results for other outcomes of interest. In particular, we use the OEWS data to create employment-weighted industry measures of coordination intensity and correlate it with measures of industry productivity, such as real GDP and total factor productivity (TFP). We construct TFP by regressing log real GDP on employment and capital, taking the residual as TFP. Again, while the relationships are not statistically significant, particularly as we only have 46 industries available in the matched sample, the estimates are still economically meaningful and consistent with our earlier results on hourly wages. Table \ref{tab:occupation_dlautomate} also replicates the results in the main text linking growth in automation intensity between 2002 and 2018 with coordination intensity in a multivariate context, remaining economically meaningful.

\begin{table}[htbp]\centering
\def\sym#1{\ifmmode^{#1}\else\(^{#1}\)\fi} \caption{Controlling for Observables in Industry Productivity Regressions \label{tab:industry_coordinate}}
\resizebox{11cm}{!}{
\begin{tabular}{l*{6}{c}} 
\hline\hline
\expandableinput{tables/industry_coordinate.tex}
\hline\hline 
\end{tabular}}
\fnote{Notes.---Sources: American Community Survey (5-year 2021), Occupational Employment and Wage Statistics, Bureau of Economic Analysis, and O*NET. The table reports the coefficients associated with regressions of the logged real GDP and logged total factor productivity (TFP) on the standardized z-score of coordination intensity at the NAICS-3 industry level, controlling for various demographic characteristics and skill intensities that have been aggregated across all occupations $\times$ industry using the OEWS data and weighted by the wage bill (employment $\times$ earnings).  BEA are presented in real 2012 prices and TFP is calculated by regressing logged GDP in a sector on logged employment and capital, taking the residual. We gather the demographic characteristics from the ACS as five-digit SOC occupational averages. We also gather various skill intensities to control for additional dimensions of unobserved heterogeneity. Standard errors are clustered at the NAICS-3 level and observations are weighted by employment.}
\end{table}

\begin{table}[htbp]\centering
\def\sym#1{\ifmmode^{#1}\else\(^{#1}\)\fi} \caption{Controlling for Observables in Automation Growth Regressions \label{tab:occupation_dlautomate}}
\resizebox{7cm}{!}{
\begin{tabular}{l*{6}{c}} 
\hline\hline
\expandableinput{tables/occupation_dlautomate.tex}
\hline\hline 
\end{tabular}}
\fnote{Notes.---Sources: American Community Survey (5-year 2021) and O*NET. The table reports the coefficients associated with regressions of the growth rate of an index of automation of tasks between 2002-04 and 2014-18 averages on the standardized z-score of coordination intensity at the occupational level, controlling for various demographic characteristics and skill intensities (2014-18 occupational averages). Our index of automation is based on work context information, i.e. how easily tasks are automated within an occupation. We gather the demographic characteristics from the ACS as five-digit SOC occupational averages. We also gather various skill intensities to control for additional dimensions of unobserved heterogeneity. Standard errors are clustered at the occupational level and observations are weighted by employment.}
\end{table}

Finally, Table \ref{tab:chatgpt_coordinate} presents preliminary results linking exposure to LLMs from Eloundou et al. (2023) with employment-weighted coordination intensity across 71 industries. Columns 1 and 4 present the raw cross-sectional correlation between coordination intensity and the two measures of LLM exposure. Crucially, as we layer additional controls, even standard demographic characteristics in columns 2 and 5, the marginal effect turns negative. Furthermore, additional controls for industry-weighted skill requirements in columns 3 and 6, the coefficients remain negative. That only the estimate in column 3 is statistically significant at the 10\% level reflects the small sample size. In short, the negative relationship between coordination intensity and exposure to LLMs suggests that it has the potential to augment---or at least not displace---jobs that require greater degrees of coordination. A natural question is why omitted variables play a more substantial role in the context of LLM exposure than in the preceding tables. As Eloundou et al. (2023) point out, higher skilled occupations are more exposed to LLMs, so failing to control for confounding factors here will create bias in the opposite direction.

\begin{table}[htbp]\centering
\def\sym#1{\ifmmode^{#1}\else\(^{#1}\)\fi} \caption{Preliminary Evidence on Large Language Model Exposure and Coordination Intensity \label{tab:chatgpt_coordinate}}
\resizebox{10cm}{!}{
\begin{tabular}{l*{6}{c}} 
\hline\hline
\expandableinput{tables/chatgpt_coordinate.tex}
\hline\hline 
\end{tabular}}
\fnote{Notes.---Sources: American Community Survey (5-year 2021), Occupational Employment and Wage Statistics (OEWS), and O*NET. The table reports the coefficients associated with regressions of the standardized z-score of coordination intensity at the NAICS-3 industry level on a standardized score of exposure to large language model (LLM) from Eloundou et al. (2023), controlling for various demographic characteristics and skill intensities that have been aggregated across all occupations $\times$ industry using the OEWS data and weighted by the wage bill (employment $\times$ earnings). Eloundou et al. (2023) construct two measures of LLM exposure based on task data: one using human coders, and another using ChatGPT itself. The correlation between the two for the 71 industries is 0.97. We gather the demographic characteristics from the ACS as five-digit SOC occupational averages. We also gather various skill intensities to control for additional dimensions of unobserved heterogeneity. Standard errors are clustered at the NAICS-3 level and observations are weighted by employment.}
\end{table}
\end{document}